\theoremstyle{plain}
\newtheorem{mythe}{Theorem}
\theoremstyle{remark}
\theoremstyle{plain}
\theoremstyle{remark}
\theoremstyle{plain}
\theoremstyle{remark}
\theoremstyle{remark}
\theoremstyle{remark}
\theoremstyle{remark}
\theoremstyle{remark}
\theoremstyle{remark}
\def\UrlSpecials{\do\~{\kern -.15em\lower .7ex\hbox{~}\kern .04em}} \catcode`~=13
\newcommand{\calB}{\mathcal{B}}
\newcommand{\calK}{\mathcal{K}}
\newcommand{\calM}{\mathcal{M}}
\newcommand{\bB}{\mathbf{B}}
\newcommand{\bp}{\mathbf{p}}
\newcommand{\bP}{\mathbf{P}}
\newcommand{\bq}{\mathbf{q}}
\newcommand{\bQ}{\mathbf{Q}}
\newcommand{\bt}{\mathbf{t}}
\newcommand{\bu}{\mathbf{u}}
\newcommand{\bw}{\mathbf{w}}
\newcommand{\bx}{\mathbf{x}}
\newcommand{\bX}{\mathbf{X}}
\newcommand{\by}{\mathbf{y}}
\newcommand{\bz}{\mathbf{z}}
\newcommand{\rmd}{\mathrm{d}}
\newcommand{\bbR}{\mathbb{R}}
\DeclareMathAlphabet{\mathbsf}{OT1}{cmss}{bx}{n}
\DeclareMathAlphabet{\mathssf}{OT1}{cmss}{m}{sl}% slanted sans serif
\DeclareSymbolFont{bsfletters}{OT1}{cmss}{bx}{n}
\DeclareSymbolFont{ssfletters}{OT1}{cmss}{m}{n}
\DeclareMathSymbol{\bsfGamma}{0}{bsfletters}{'000}
\DeclareMathSymbol{\ssfGamma}{0}{ssfletters}{'000}
\DeclareMathSymbol{\bsfDelta}{0}{bsfletters}{'001}
\DeclareMathSymbol{\ssfDelta}{0}{ssfletters}{'001}
\DeclareMathSymbol{\bsfTheta}{0}{bsfletters}{'002}
\DeclareMathSymbol{\ssfTheta}{0}{ssfletters}{'002}
\DeclareMathSymbol{\bsfLambda}{0}{bsfletters}{'003}
\DeclareMathSymbol{\ssfLambda}{0}{ssfletters}{'003}
\DeclareMathSymbol{\bsfXi}{0}{bsfletters}{'004}
\DeclareMathSymbol{\ssfXi}{0}{ssfletters}{'004}
\DeclareMathSymbol{\bsfPi}{0}{bsfletters}{'005}
\DeclareMathSymbol{\ssfPi}{0}{ssfletters}{'005}
\DeclareMathSymbol{\bsfSigma}{0}{bsfletters}{'006}
\DeclareMathSymbol{\ssfSigma}{0}{ssfletters}{'006}
\DeclareMathSymbol{\bsfUpsilon}{0}{bsfletters}{'007}
\DeclareMathSymbol{\ssfUpsilon}{0}{ssfletters}{'007}
\DeclareMathSymbol{\bsfPhi}{0}{bsfletters}{'010}
\DeclareMathSymbol{\ssfPhi}{0}{ssfletters}{'010}
\DeclareMathSymbol{\bsfPsi}{0}{bsfletters}{'011}
\DeclareMathSymbol{\ssfPsi}{0}{ssfletters}{'011}
\DeclareMathSymbol{\bsfOmega}{0}{bsfletters}{'012}
\DeclareMathSymbol{\ssfOmega}{0}{ssfletters}{'012}
\newcommand{\tild}{\widetilde{d}}
\newcommand{\barE}{\bar{E}}
\newcommand{\barK}{\bar{K}}
\newcommand{\barQ}{\bar{Q}}
\def\norm#1{\left\| #1 \right\|}
\def\norm2#1{\left\| #1 \right\|_2}
\def\norm22#1{\left\| #1 \right\|_2^2}
\newcommand{\eqa}{\stackrel{(a)}{=}}
\newcommand{\leb}{\stackrel{(b)}{\le}}
\newcommand{\lec}{\stackrel{(c)}{\le}}
\newcommand{\qednew}{\nobreak \ifvmode \relax \else
      \ifdim\lastskip<1.5em \hskip-\lastskip
      \hskip1.5em plus0em minus0.5em \fi \nobreak
      \vrule height0.75em width0.5em depth0.25em\fi}
\title{Energy-Efficient UAV Backscatter Communication with Joint Trajectory Design and Resource Optimization}
\author{Gang~Yang, \emph{Member, IEEE}, Rao Dai, \emph{Student Member, IEEE}, and Ying-Chang~Liang, \emph{Fellow, IEEE}
\thanks{The conference version \cite{UAVBackComYangDaiICC2019} of this paper was presented in IEEE International Conference on Communications, Shanghai, China, in May 2019. }
 \thanks{G.~Yang and R. Dai are with the National Key Laboratory of Science and Technology on Communications, and Center for Intelligent Networking and Communications (CINC), University of Electronic Science and Technology of China (UESTC), Chengdu 611731, China (e-mails: yanggang@uestc.edu.cn, dairao@std.uestc.edu.cn).}
\thanks{Y.-C. Liang is with Center for Intelligent Networking and Communications (CINC), University of Electronic Science and Technology of China (UESTC), Chengdu 611731, China (e-mail: liangyc@ieee.org). (\emph{Corresponding author: Y.-C. Liang.})}}
\begin{document}
 \maketitle

%\vspace{-1cm}
\begin{abstract}
Backscatter communication which enables wireless-powered backscatter devices (BDs) to transmit information by reflecting incident signals, is an energy- and cost-efficient communication technology for Internet-of-Things. This paper considers an unmanned aerial vehicle (UAV)-assisted backscatter communication network (UBCN) consisting of multiple BDs and carrier emitters (CEs) on the ground as well as a UAV. A communicate-while-fly scheme is first designed, in which the BDs illuminated by their associated CEs transmit information to the flying UAV in a time-division-multiple-access manner. Considering the critical issue of the UAV's limited on-board energy and the CEs' transmission energy, we maximize the energy efficiency (EE) of the UBCN by jointly optimizing the UAV's trajectory, the BDs' scheduling, and the CEs' transmission power, subject to the BDs' throughput constraints and harvested energy constraints, as well as other practical constraints. Furthermore, we propose an iterative algorithm based on the block coordinated decent method to solve the formulated mixed-integer non-convex problem, in each iteration of which the variables are alternatively optimized by leveraging the cutting-plane technique, the Dinkelbach's method and the successive convex approximation technique. Also, the convergence and complexity of the proposed algorithm are analyzed. Finally, simulation results show that the proposed communicate-while-fly scheme achieves significant EE gains compared with the benchmark hover-and-fly scheme. Useful insights on the optimal trajectory design and resource allocation are also obtained.
\end{abstract}

\begin{IEEEkeywords}%%\vspace{-0.3cm}
Backscatter communication, UAV communication, energy efficiency, trajectory design, resource optimization, iterative algorithm.
\end{IEEEkeywords}

\section{Introduction}%battery replacement can be costly ZhangLiangXiaoWC18
 %\vspace{-0.1cm}
 \subsection{Motivation}
Internet-of-Things (IoT) is revolutionizing the way we live by providing ubiquitous connectivity among billions of devices \cite{ZhangLiangXiaoWC18}. Backscatter communication (BackCom) enables passive backscatter devices (BDs) to transmit information by modulating incident sinusoidal carriers or ambient radio-frequency (RF) carriers without using any power-hungry or complex RF transmitters, and thus is an energy- and cost-efficient communication technology for IoT devices that typically have limited battery energy and strict cost constraint~\cite{BletsasSPM18,BoyerRoy,XuYangZhang,ABCSigcom13,YangLiangZhangPeiTCOM17,BletsasKimionis14,BletsasAlevizos15,LoreaVarshney2017,GongNiyato18,YangXuLiangWCL19}. Specifically, the bistatic BackCom (BBC) systems with spatially separated carrier emitters (CEs) and backscatter receivers (BRs) \cite{BletsasKimionis14} were demonstrated to achieve a communication range on the order of hundreds of meters \cite{BletsasAlevizos15} \cite{LoreaVarshney2017}, and has various applications such as monitoring environmental humidity and soil moisture~\cite{DaskalakisBletsasTMTT16}.
However, the current BBC systems with fixed BRs face two main challenges. First, it is cost-inefficient to directly deploy a BBC network for data collection in large-scale IoT. Since the communication range of BBC is shorter than that of traditional communication with active radio, many expensive BRs are needed to cover massive BDs. Second, the transmission rates of distributed BDs suffer from fairness issue. For a BD located far away from both its associated CE and the BR, the backscattered signals endure severe channel fading twice, resulting into a low data rate compared with that of other nearby BDs.

%emergency communications for disaster recovery Most literature focus on the UAV trajectory design and resource allocated in IoT networks.
%The work in \cite{XuZhangIoTJ18} investigated a UAV-integrated wireless powered communication network (WPCN), where a UAV is dispatched as a mobile access point to serve a set of GNs periodically.

Unmanned aerial vehicle (UAV)-assisted wireless communication has attracted growing research interests from both academy and industry, due to its advantages in flexible deployment, fully controllable mobility, and high probability of line-of-sight (LoS) links from air to ground~\cite{ZengZhangCM16}. UAVs have a great potential in enhancing the performance of wireless communications, such as providing assistance for existing terrestrial cellular networks~\cite{SharmaBennisKumar16,ChengZhangOffloading,ChenZhaoRelays}, and enabling information dissemination and collection in wireless sensor networks~\cite{MozaffariDebbahTWC16,WuZhangTWC18,XuZhangIoTJ18}.

Motivated by the superiority in UAV-assisted wireless communications, we introduce a UAV to act as a flying BR for a BBC network. The UAV can adjust its flying trajectory to obtain higher transmission rate by exploiting better air-to-ground channel conditions, and achieve better BD fairness by intelligently scheduling the BDs to transmit to the flying UAV. The UAV not only has lower cost than expensive BRs, but also is flexible and efficient to collect data from massive BDs in a large-scale BackCom network. Since energy is critical for passive IoT networks and also vitally important for UAV due to its limited on-board energy, we aim to maximize the energy efficiency (EE) of  such a UAV-assisted  BackCom network (UBCN) in this paper.

\subsection{Related Works}
The existing BackCom systems can be divided into four categories, namely the monostatic BackCom (MBC) system \cite{Dobkinbook2007,YangBackscatter15,GuoZhouYanikomeroglu18,MishraLarsson19} (e.g., radio-frequency identification) with co-located illuminating CEs and BRs, the BBC system~\cite{BletsasKimionis14,BletsasAlevizos15,LoreaVarshney2017,GongNiyato18,YangXuLiangWCL19}, the ambient BackCom (ABC) system~\cite{ABCSigcom13,YangLiangZhangPeiTCOM17,QianGaoAmBCTWC16,ZhangLiangJSAC19} with separately located ambient transmitters (e.g., WiFi access point, cellular base station) and BRs, as well as the symbiotic radio (SR) system with co-located legacy transmitter/receiver and BR \cite{YangLiangZhangIoTJ18, LongGuoYangArxiv18,ZhangLiangSRAccess19, LiangLISA2019,LongLiangSRAccess19}. The SR system enables the backscatter transmission to share both the spectrum and the infrastructure of existing legacy wireless communication systems. The communication coverage of traditional MBC is inherently limited to tens of meters by the short distance of energy transfer to passive BDs like tags, due to fast decaying of electromagnetic waves with respect to distance \cite{Dobkinbook2007}. The ABC also has a limited coverage owing to the strong direct-link interference from uncontrollable ambient transmitters, which increases the difficulty in decoding BD's information. In contrast, the BBC enables BDs to communicate with remote BRs hundreds of meters away, since the CE-to-BD distance can be short and the direct-link carrier signals can be easily removed without introducing interference~\cite{BletsasAlevizos15}.

UAVs can exploit the high mobility to change their positions and fly near to low-power IoT devices for strong air-to-ground LoS links, which makes them especially suitable for information dissemination and collection in IoT networks \cite{ZengZhangCM16}.
%The authors in~\cite{MozaffariDebbahTWC17} investigated the optimal deployment and mobility of multiple UAVs to minimize the uplink transmission power of ground IoT devices.
In \cite{WuZhangTWC18}, the minimal average rate of ground users was maximized
%the UAVs are deployed to serve ground users for the purpose of maximin average rate,However, these works ignore the UAVs' propulsion energy consumption, which is vitally important for UAV-assisted communication due to drones' limited on-board energy.
based on an optimal joint design of the UAV's trajectory, the power control and the uplink access scheduling. In \cite{MozaffariDebbahTWC17}, the optimal deployment and mobility of multiple UAVs were investigated to minimize the transmission power of ground IoT devices. The EE of a fixed-wing UAV-assisted wireless communication system was studied in \cite{ZengZhangEETWC17}, but this work is limited to the case of a single UAV and a single ground user. In \cite{Zeng2018Energy}, the minimal energy consumption of a UAV serving multiple IoT devices was studied based on a theoretical model of the propulsion energy consumption for rotary-wing UAVs. In \cite{HuangYangCL18}, the power consumption of the UAV was minimized while guaranteing the required transmission rate of multiple sensor nodes.

Recently, the authors of \cite{FarajzadehYanikomeroglu} studied a UAV-assisted MBC network, in which the UAV acts both as a flying CE and as a flying BR. The BDs adopt a non-orthogonal-multiple-access (NOMA) scheme to transmit information to the BR. The optimal UAV flying altitude was optimized to maximize the number of successfully decoded bits in the uplink while minimizing the flight time. Our paper differs from this work \cite{FarajzadehYanikomeroglu} in the following two main aspects. First, our paper has a different system model and a different multiple-access scheme, i.e., we study the UAV-assisted BBC network with a time-division-multiple-access (TDMA) transmission scheme. Second, our paper has different design objective and optimization dimension, i.e., we aim to design an energy-efficient UBCN via jointly optimizing UAV's trajectory and system resources.%the BDs' schedule, the CEs' transmission power, and the UAV trajectory.

\subsection{Contributions}
%Motivated by BBC and UAV communication, t
As shown in Fig.~\ref{fig:systemdes}, this paper investigates a UBCN, where a flying rotary-wing UAV collects data from multiple BDs on the ground. We aim to maximize the EE of such a UBCN. The main contributions are summarized as follows: %in its allocated slot(s)on the ground, which has not been studied in the literature to our best knowledge
\begin{itemize}
  \item
  First, a communicate-while-fly scheme is proposed for the UBCN. The BDs harvest energy from the incident sinusoidal signals emitted by their associated CEs, and transmit information to the flying UAV in a TDMA manner. We derive the BDs' throughput performance depending on the BDs' scheduling, the CEs' transmission power as well as the UAV's flying trajectory, and analyze the system's total energy consumption consisting of the CEs' transmission energy and the UAV's energy consumption.
%in which multiple BDs harvest energy from incident RF signals and transmit information to the flying UAV by reflecting in a dynamic time-division-multiple-access (TDMA) manner. We further analyze the BDs' throughput performance and the UBCN's energy consumption.
% general communicate-at-fly (D-TDMA) and energy transfer incident sinusoidal carriers from CEs  (D-TDMA)
% in a time-division-multiple-access (TDMA) mannerp
\item Second, we formulate an optimization problem to maximize the EE of the UBCN by jointly optimizing three blocks of variables including the BDs' scheduling, the CEs' transmission power and the UAV's flying trajectory, subject to each BD's minimal throughput constraint and minimal harvested energy constraint, the UAV's maximum speed constraint and other practical constraints. The formulated problem is appealing, since the EE performance can benefit from multiple design dimensions. However, the problem is non-convex and challenging to be solved optimally.
  \item Third, we propose an efficient iterative algorithm to solve the EE maximization problem. A time-discretization approach is applied to transform the original problem with integrals of continuous-time variables into a problem with discrete-time variables. Then, an iterative algorithm based on the block-coordinated-decent (BCD) method is proposed to solve the time-discrete problem. Specifically, in each iteration, each block of variables are optimized alternately with the other two blocks of variables fixed, by leveraging the cutting-plane technique, the Dinkelbach's method and the successive convex approximation (SCA) technique. In addition, the complexity and convergence of the proposed algorithm are analyzed.
  \item Finally, numerical results show that significant EE gains can be achieved by the proposed communicate-while-fly scheme, compared with the benchmark hover-and-fly scheme in which the UAV sequentially visits different positions and hovers for a certain time at each position to receive signals from ground BDs. In order to maximize the EE, the time resource prefers to be allocated to those BDs with stronger reflection power. Also, the UAV trajectory is optimized to balance the BDs' total throughput and the UAV's energy consumption.
\end{itemize}

The rest of this paper is organized as follows. Section~\ref{systemmodel} describes the system model of the UBCN and presents the designed communicate-while-fly scheme. Section~\ref{formulation} formulates the EE maximization problem with joint optimization of the UAV's trajectory and system resources. Section~\ref{solution} presents an efficient iterative algorithm to solve the EE optimization problem. Section~\ref{hover-and-fly} introduces the hover-and-fly scheme as a practical benchmark. Section~\ref{simulation} shows numerical results to verify the performance of the proposed communicate-while-fly scheme and the designed algorithm. Finally, Section~\ref{conslusion} concludes this paper. % by applying the techniques of BCD, SCA, and Dinkelbach's method for a UAV-assisted BBCN

We use the following notations throughout this paper. Bold uppercase letters $\bX$, bold lowercase letters $\bx$ and nonbold lowercase letters $x$ denote matrices, column vectors and scalers, respectively. $\bX^{\text{T}}$ and $\|\bx\|$ denote the transpose and the Euclidean norm, respectively; $\dot{x}(t)$ is the first derivative of function $x(t)$; $|\mathcal{X}|$ represents the cardinality of set $\mathcal{X}$; and $\bbR^{x \times y}$ denotes the space of $x \times y$ real matrices. Finally, $\triangleq$ means the equivalence in definition.
%A complex circularly symmetric Gaussian random vector with mean $\boldmath 0$ and covariance matrix $\bX$ is denoted by

%\section{System Model and Problem Formulation}\label{modelformulation}
%%\vspace{-0.2cm}
\section{System Model}\label{systemmodel}%sub
%%\vspace{-0.1cm}
% UAV backscatter communication network (
\subsection{System Description}
As shown in Fig.~\ref{fig:systemdes}, we consider a UBCN in which a flying rotary-wing UAV is deployed to collect data from $K \ (K \geq 1)$ passive BDs illuminated by $M \ (M \geq 1)$ CEs on the ground. A BBC network structure is adopted for the UBCN, where passive BDs are uniformly distributed and multiple CEs are deployed for seamless coverage of this area, with the sets of $\mathcal{K} \triangleq \{1, \ldots, K\}$ and $\mathcal{M} \triangleq \{1, \ldots, M\}$, respectively.
%Denote all BDs and CEs by the sets $\mathcal{K} \triangleq \{1, \ldots, K\}$ and $\mathcal{M} \triangleq \{1, \ldots, M\}$, respectively.
We assume that each BD is associated to its geographically closest CE. Without loss of generality, the BDs associated to CE $m \in \calM$ are denoted by the set $\calK_m$ with cardinality $K_m \triangleq |\calK_m|$. Clearly, $\calK=\calK_1 \bigcup \calK_2 \bigcup \cdots \bigcup \calK_M$. The maximum number of BDs supported by a CE is defined as $\barK \triangleq \max \{K_1, \ldots, K_M\}$. In order to transmit information to the UAV, BD $k \in \calK_m$ modulates the incident RF signals that are transmitted from its associated CE $m$ by intentionally switching its load impedance to change the amplitude and/or phase of the reflected signals. BD $k$ also splits a portion of the incident RF signals for energy harvesting to support its circuit operation. The UAV is assumed to fly at a fixed altitude $H \ (H>0)$ within $T \  (T>0)$ seconds (s) to collect data. In practice, the UAV altitude $H$ needs to be properly chosen according to the network coverage and the communication range constraint of BBC \cite{BletsasKimionis14}. In addition, all the CEs send carriers at the same frequency $f_c$.~%AlevizosBletsasSensorTCOM17

\begin{figure} [t!]
	\centering \includegraphics[width=0.7\columnwidth]{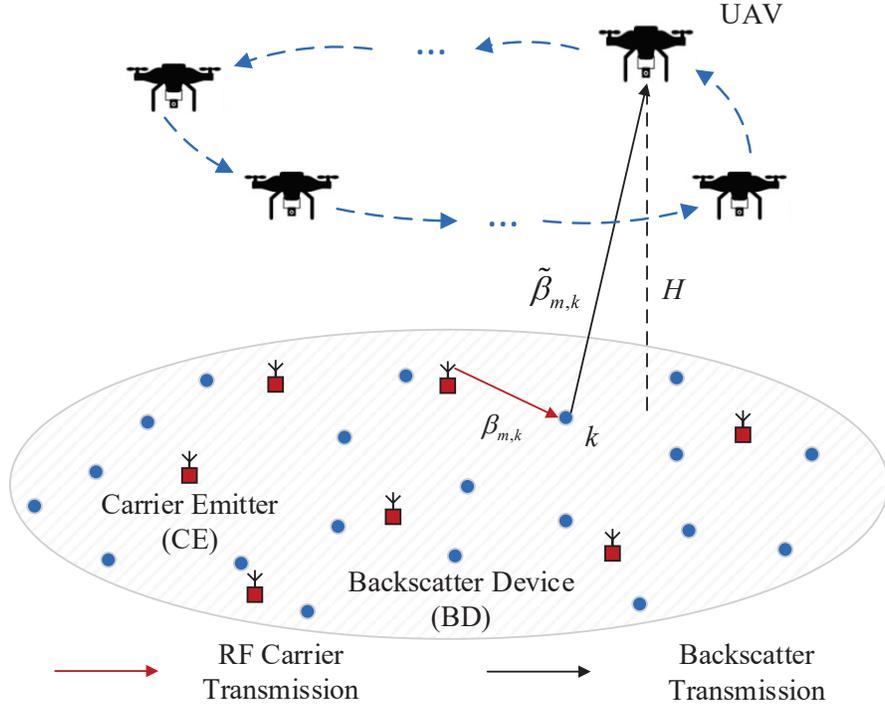}
\caption{System Description.} \label{fig:systemdes}
%\vspace{-0.6cm}
\end{figure}

%, and the backscattered signal is received and finally decoded by the UAV
% for data collection in the uplink
%There exist two different kinds of IoT devices on the ground: a number of passive tags located uniformly with density $\lambda_{tag}$ (number of tags per $\rmm^{2}$) in this area, and a group of carrier emitters (CEs) artificially set for seamless coverage of tags through a simple strategy.

%For the convenience of analysis, a 3D Cartesian coordinate system is considered where t , m\in \mathcal{M}, k\in \mathcal{K}

%$\mathbf{u}_{m}=[x_{m}, y_{m}]^{T}$, and $\mathbf{w}_{k}=[x_{k}, y_{k}]^{T}$, respectively
For convenience of analysis, a two-dimensional Cartesian coordinate system is considered herein. The locations of CE $m \in \calM$ and BD $k \in \calK_m$ are denoted by $\mathbf{u}_{m} \in \bbR^{2 \times 1}$ and $\mathbf{w}_{m,k} \in \bbR^{2 \times 1}$, respectively. The exact locations of the CEs and BDs on the ground are assumed to be known by a central controller of the system. The UAV's position projected onto the horizontal plane at time $t \in [0,T]$ is denoted by $\mathbf{q}(t)=[x(t), y(t)]^{\text{T}} \in \bbR^{2 \times 1}$. The fixed distance between BD $k$ and CE $m$ is given by $d_{m,k}= \|\bw_{m,k}-\bu_{m}\|$, and the time-varying distance between BD $k$ and the UAV is denoted by $\tild_k (t)=\sqrt{H^2+\|\bw_{m,k}-\bq(t)\|^2}$.
%, for $0\leq t\leq T$
% In addition, the locations of all the ground devices are assumed to be known by a central controller.
%, which is very different from terrestrial communication systems [Reference]
%For ease of exposition,
Moreover, we assume that the CE-to-BD links are Rayleigh channels with distance-dependent large-scale fading coefficients, and the BD-to-UAV links are dominated by the LoS links, which differs from the channel links of terrestrial communication systems \cite{ZengZhangCM16, Zeng2018Energy}.
%,ChenZhaoRelays,MozaffariDebbahTWC17,ZengZhangEETWC17,
Based on the free-space path loss model, the channel power gain between CE $m$ and BD $k \in \calK_m$ can be expressed as $\beta_{m,k}=\frac{\beta_{0}}{\left\|\bw_{m,k}-\bu_{m}\right\|^2}$, where $\beta_{0}$ is the channel power gain at a reference distance of one meter (m). Similarly, the channel power gain between BD $k$ and the UAV is given by ${\tilde{\beta}}_{m,k}(t) =\frac{\beta_{0}}{H^2+\left\|\bw_{m,k}-\bq(t)\right\|^2}$.

%\begin{equation}\label{eq:beta_2}
%  \beta_{m,k}=\frac{\beta_{0}}{\left\|\bw_{k}-\bu_{m}\right\|^2}.
%\end{equation}%$d_{0}=1\mathrm{m}$  =\beta_{0}d_{m,k}^{-2}

%\begin{equation}\label{eq:beta_1}
%  {\tilde{\beta}}_{k}(t) =\frac{\beta_{0}}{H^2+\left\|\bw_{k}-\bq(t)\right\|^2}, \quad 0 \leq t \leq T, %=\beta_{0}d_{2,k}^{-2}(t)
%\end{equation}

\subsection{Communicate-while-fly Scheme}
We design a communicate-while-fly scheme for the UBCN. That is, the BDs illuminated by their associated CEs transmit signals to the UAV in a TDMA manner, and the UAV receives signals while flying along its designed trajectory.

For each BD $k \in \calK_m$, we use the binary variable $b_{m,k} (t) \in \{0, 1\}$ to represent its scheduling indicator at time $t$, with $b_{m,k}(t) =1$ indicating that it is scheduled to communicate with the UAV, otherwise $b_{m,k}(t) =0$. At time $t$, at most one BD is allowed to transmit signals in the uplink, and thus we have the scheduling constraint $\sum_{m=1}^{M} \sum_{k=1}^{\barK} b_{m,k}(t) \leq 1, \;\; \forall t \in [0, T]$. Denote all BDs' scheduling indicators at time $t$ by matrix $\bB(t) \in \bbR^{M \times \barK}$.

% Denote the transmission power of CE $m$ by $P_{m}(t) \in [0, P_{\sf max}]$, and the corresponding vector by $\bp(t) = [ P_1(t), \ldots, P_M(t)]^T \in \bbR^{M \times 1 }$.
Depending on the reflecting status of each BD, a proportion of the BD's incident power is reflected, and the remained power is used for energy harvesting. In practice, BDs have finite reflecting status which are once implemented and then fixed. In this paper, following the parameters of most commercial tags \cite{TagDataSheetImpinpj}, we assume that each BD either fully backscatters its incident signals or uses all the incident signals for energy harvesting\footnote{In practice, typical commercial RFID tags like Impinj Monza R6 \cite{TagDataSheetImpinpj} adopt the two-state amplitude-shift-keying modulation.}.
% and no signals are reflected.
%are fully backscattered by the BD, and thus no wireless energy are used for harvesting; $b_{m,k} (t) = 0$ represents that incident signals are fully used for energy harvesting and no signals are reflected.
%Notice that the power reflection coefficients take finite and discrete values, since in practical BDs. In this paper, we consider a binary power reflection coefficient for each BD $k \in \calK_m$, denoted by $b_{m,k} (t) \in \{0, 1\}$, which is typical for commercial BDs like tags.
%$b_{m,k}(t)$ (i.e., power reflection coefficient)

Denote the transmission power of CE $m$ by $P_{m}(t) \in [0, P_{\text{max} }]$, and the corresponding vector by $\bp(t) = [ P_1(t), \ldots, P_M(t)]^T \in \bbR^{M \times 1 }$. Let $\eta_{m,k} \in [0, 1]$ be the RF-energy harvesting efficiency of each BD. Thus the energy harvested by BD $k \in \calK_m$ can be expressed as %given by % with elements $b_{m,k}(t)=0$ for $k \notin \calK_m$  associated to CE $m$
\begin{align}
  E_{m, k}= \int_{0}^{T}\eta_{m,k} \beta_{m,k}P_{m}(t)\left(1-b_{m,k}(t)\right)\rmd t.
\end{align}

\subsection{Energy Efficiency Performance}
The signal-to-noise-ratio (SNR) of signals received by the UAV from BD $k \in \calK_{m}$ can be expressed as $\gamma_{m,k} = P_{m}(t) b_{m,k}(t){\tilde{\beta}}_{m,k}(t)  \beta_{m,k}/\sigma^{2}$, where $\sigma^{2}$ is the additive white Gaussian noise power at the UAV receiver. The throughput of BD $k$ normalized to bandwidth during the period of $T$ in bits-per-Hertz (bits/Hz), denoted by $Q_{m,k}$, is given by
\begin{equation}\label{eq:throughput_k}
   Q_{m,k} (\bB(t), \bp(t), \bq(t)) =
   %&=B \int_{0}^{T}a_{m, k}(t)\log_{2}\left(1+\frac{P_{m}(t) b_{k} (t)\beta_{1,k}\beta_{2,k}(t)}{\sigma^{2}}\right)\rmd t \\
        \int_{0}^{T} b_{m, k}(t)  \log_{2}\left(1+\frac{{\beta_0}  \beta_{m,k} P_{m}(t)}{\sigma^{2} \left(H^2+\left\|\bw_{m,k}-\bq(t)\right\|^2\right)}\right) \rmd t.
\end{equation}

%To guarantee the target communication throughput for each BD, we have the following constant

%Given the fact that the UAV is always energy constrained, it is critical to optimize the flying trajectory and the communication design of the UAV so as to achieve and energy- and spectrum-efficient communications.
%The energy consumption of rotary-wing UAVs has been studied in the literature.

%%% backscatter communication system
%%%On the other hand,
The UBCN is energy-constrained due to the UAV's limited on-board battery, which is a crucial issue that should be carefully considered.
In general, the UAV's energy consumption consists of the propulsion energy and the communication energy.
% which is related to the UAV's movement and decided by the flying speed, acceleration, and other characteristic parameters of the UAV.
We ignore the communication energy, as it is typically much smaller than the propulsion energy\cite{ZengZhangEETWC17}\cite{Zeng2018Energy}.
%ZengZhangCM16,ChenZhaoRelays,MozaffariDebbahTWC17,
%Since the propulsion energy dominates the UAV's energy consumption \cite{ZengZhangCM16,ChenZhaoRelays,MozaffariDebbahTWC17, ZengZhangEETWC17, Zeng2018Energy}, we ignore the communication energy.
%Furthermore, the propulsion energy is related to the UAV's movement, and decided by the flying speed, acceleration, and the characteristic parameters of the UAV.
%which relates to the signal transmission/reception and circuits operation, and the propulsion energy which relates the UAV's flying speed, acceleration, together with other UAV parameters such as weight and size.
%Since the UAV's energy consumption is dominated by the propulsion energy \cite{ZengZhangCM16,ChenZhaoRelays,MozaffariDebbahTWC17, ZengZhangEETWC17, Zeng2018Energy}, we ignore the communication energy.
%Moreover,
%Furthermore, we omit the effects of the flying acceleration on the UAV energy consumption, which is commonly adopted in the literature \cite{Zeng2018Energy}.
%After omitting the effects of the flying acceleration \cite{Zeng2018Energy}, the propulsion power consumption mode for the rotary-wing UAV is given by
Moreover, we adopt the following propulsion-power model of rotary-wing UAVs \cite{Zeng2018Energy}
\begin{equation}\label{PV}
  P_{\text{UAV}}(t)=P_{\text{b}}\negthinspace \left(1+\frac{3V(t)^2}{U_{\text{tip}}^2}\right)  +
  P_{\text{i}}\negthinspace\left(\sqrt{1+\frac{V(t)^4}{4v_{0}^4}}-\frac{V(t)^2}{2v_{0}^2}\right)^{\mspace{-10mu}\frac{1}{2}}
   +\frac{1}{2}d_0\theta sAV(t)^3,
\end{equation}
%\mspace{-9mu}
where $V(t)=\sqrt{|\dot{x}(t)|^2+|\dot{y}(t)|^2}$ is the UAV's flying speed in meters-per-second (m/s),
%with $\dot{x}(t)$ denoting the first derivative of $x(t)$,
$P_{\text{b}}$ and $P_{\text{i}}$ represent the blade profile power and induced power in hovering status (i.e., $V(t)=0$) respectively, $U_{\text{tip}}$ is the tip speed of the rotor blade, $v_{0}$ is the mean rotor induced velocity, and $\theta$ is the air density. Other parameters in \eqref{PV} depend on the UAV's properties and environment conditions~\cite{Zeng2018Energy}.
%an all other parameters are all constants concerned with environment and UAV properties. Other parameters in \eqref{PV} are all constants depending on the UAV properties.
%the continuous function
%where $V(t)\in [0, V_{\sf max}]$ is the UAV flying speed in meter-per-second (m/s), and $V(t)=\sqrt{[\dot{x}(t)+\dot{y}(t)]^{2}}$.

% backscatter communication system
Considering the energy consumption of both the UAV and the CEs, from \eqref{eq:throughput_k} and \eqref{PV}, the EE of the UBCN is given by %expressed as
\begin{align}
  \mathbf{EE}=\frac{\sum\limits_{m=1}^{M}\!\sum\limits_{k=1}^{\barK} \int_{0}^{T} b_{m,k}(t)\log_{2}\left(  1 + \frac{\beta_{0}\,\beta_{m,k}P_{m}(t)}{\sigma^{2}\left(\! H^2+\left\|\bw_{m,k}-\bq(t)\right\|^2\!\right)}\right) \rmd t}{\int_{0}^{T}\left(P_{\text{UAV}}(t)+\sum\limits_{m=1}^{M}P_{m}(t)\right)\rmd t}.
%  \frac{\negthickspace \int_{0}^{T}\mspace{-10mu}\sum\limits_{m=1}^{M}\!\sum\limits_{k \in \calK_m} \negmedspace a_{m,k}(t)\!\log_{2}\negthickspace\left(\negthickspace 1\negthickspace+\negthickspace\frac{\beta_{0}\,\beta_{m,k}P_{m}(t) b_{m,k}(t)}{\sigma^{2}\left(H^2+\left\|\bw_{k}-\bq(t)\right\|^2
%\right)}\negthickspace\right)\negthickspace\rmd t}
%{\int_{0}^{T}\negthickspace\big(P_{\text{UAV}}(t)+\sum\limits_{m=1}^{M}P_{m}(t)\big)\rmd t}.
\end{align}

It is noted that the above EE for the UBCN differs from that for a single ground user sending data to a flying fixed-wing UAV via active transmission \cite{ZengZhangEETWC17}. Although the reference \cite{XuZhangIoTJ18} optimized the total throughput of multiple ground devices which are sequentially powered by and transmit to a flying UAV, the EE of that whole network is not studied.

\section{Problem Formulation}\label{formulation}
%AV backscatter communication system.
In this section, we formulate the EE optimization problem for such a UBCN, and give the transferred tractable problem via the time-discretization method. The objective is to maximize the EE of the UBCN by jointly optimizing the BDs' scheduling $\bB(t)$, the CEs' transmission power $\bp(t)$ and the UAV's flying trajectory $\bq(t)$. Mathematically, the optimization problem can be formulated as %follows
\begin{subequations}\label{eq:PCT}
\begin{align}
%&\underset{\begin{subarray}{l}\bA(t),\bq(t),\\\bp(t),\bb(t)\end{subarray}}{\max}
%\frac{\negthickspace \negthickspace\int_{0}^{T}\mspace{-10mu}\sum\limits_{m=1}^{M}\!\sum\limits_{k \in \calK_m}\negmedspace a_{m,k}(t)\!\log_{2}\negthickspace\left(\negthickspace 1\negthickspace+\negthickspace\frac{\beta_{0}\,\beta_{m,k}P_{m}(t) b_{m,k}(t)}{\sigma^{2}\left(H^2+\left\|\bw_{k}-\bq(t)\right\|^2
%\right)}\negthickspace\right) \negthickspace\rmd t}{\int_{0}^{T}\negthickspace\left(P_{\text{UAV}}(t)+\sum\limits_{m=1}^{M}P_{m}(t)\right)\rmd t}\label{eq:ObjPCT}\\
\underset{\begin{subarray}{c}\bB(t),\bp(t),\\\bq(t)\end{subarray}}{\max}&\quad
\frac{\sum\limits_{m=1}^{M}\sum\limits_{k=1}^{\barK} \int_{0}^{T} b_{m,k}(t)\log_{2}\left( 1+\frac{\beta_{0}\,\beta_{m,k}P_{m}(t)}{\sigma^{2}\left( H^2+\left\|\bw_{m,k}-\bq(t)\right\|^2\right)}\right) \rmd t}{\int_{0}^{T}\left(P_{\text{UAV}}(t)+\sum\limits_{m=1}^{M}P_{m}(t)\right)\rmd t}\label{eq:ObjPCT}\\
%&\underset{\bA(t),\bB(t),\bp(t), \bq(t)}{\max} \qquad
%\mathbf{EE} \label{eq:ObjPCT}\\
\text{s.t.} &\quad  \int_{0}^{T} b_{m,k}(t)\log_{2}\left(1+\frac{\beta_{0}\,\beta_{m,k}P_{m}(t)}{\sigma^{2}\left(H^2+\left\|\bw_{m,k}-\bq(t)\right\|^2\right)}
\right)\rmd t \geq \barQ_{m,k},\quad\forall m,k &\label{eq:C1PCT}\\
&\quad \int_{0}^{T}\eta_{m,k}\beta_{m,k}P_{m}(t)\left(1-b_{m,k}(t)\right)\rmd t \geq \barE_{m,k},\forall m,k \label{eq:C2PCT}\\
&\quad \sum\limits_{m=1}^{M}\sum\limits_{k=1}^{\barK}b_{m,k}(t) \leq 1,\quad\forall t \label{eq:C3PCT}\\
&\quad b_{m,k}(t) \in \{0,1\},\quad\forall m,k,t \label{eq:C4PCT}\\
&\quad 0\leq P_{m}(t)\leq P_{\text{max}},\quad\forall m,t \label{eq:C5PCT}\\
%&\quad 0\leq b_{m,k}(t)\leq 1,\quad\forall m,k,t \label{eq:C6PCT}\\
&\quad |\dot{x}(t)|^2+|\dot{y}(t)|^{2}\leq V_{\text{max}}^2,\quad\forall t \label{eq:C7PCT}\\
&\quad \bq(0)=\bq(T) \label{eq:C8PCT},
\end{align}
\end{subequations}
%where the UAV energy consumption $P_{{\rm VAV}}(t)$ is given in \eqref{PV}.
where \eqref{eq:C1PCT} is the required minimum throughput $\barQ_{m,k}$ of each BD $k \in \calK_m$, \eqref{eq:C2PCT} is the required minimum harvested energy $\barE_{m,k}$ of each BD $k \in \calK_m$, \eqref{eq:C3PCT} is the TDMA scheduling constraint for all BDs, \eqref{eq:C4PCT} is the binary constraint for the scheduling indicator variables, \eqref{eq:C5PCT} is the transmission power constraint of each CE, \eqref{eq:C7PCT} is the flying speed constraint of the UAV, and \eqref{eq:C8PCT} is the practical constraint that the UAV flies back to its start point after the operation time period of $T$ s.

The above joint optimization problem is appealing in practice. For one thing, the UAV trajectory design and the BDs' scheduling can be exploited to satisfy the uplink throughput requirement; for another, the EE can be further improved through properly allocating the CEs' transmission power and optimizing the UAV's flying speed which is included in the UAV trajectory design.
%, subject to each BD's throughput and harvested energy constraints.

However, problem \eqref{eq:PCT} cannot be solved directly due to the following two main challenges. First, variables in~\eqref{eq:PCT} are all continuous functions of time $t$ and both the nominator and the denominator of \eqref{eq:ObjPCT} include integral formulas.
%\eqref{eq:PCT} includes an infinite number of continuous-time function variables.
%Second, both the nominator and the denominator of \eqref{eq:ObjPCT} are complicated functions with integrals, lacking of closed-form expressions.
Second, the variables $\bB(t)$, $\bp(t)$, and $\bq(t)$ are always coupled with each other in \eqref{eq:ObjPCT}, \eqref{eq:C1PCT}, and \eqref{eq:C2PCT}, which makes the problem especially complicated.
%Third, the UAV trajectory $\bq(t)$, the BDs' scheduling $\bA(t)$, the BDs' power reflection coefficients $\bb(t)$, and the CEs' transmission powers $\bp(t)$ are all coupled in the objective \eqref{eq:ObjPCT}, the constraints \eqref{eq:C1PCT}, \eqref{eq:C2PCT}, and \eqref{eq:C3PCT}.
%first two optimization
To tackle the first challenge, we apply the time-discretization method \cite{WuZhangTWC18},
 %to deal with the time-continuous variables and integrals,
 in which the UAV's operation time $T$ is divided into $N$ equal and sufficiently short time slots with index $n = 1,\ldots,N$. The time duration of each slot is $T_s=T/N$.
%Therefore, the BD-to-UAV channel is supposed to be invariant during a time slot $T_s$.
Therefore, in each time slot, the UAV's position can be considered stationary and the BD-to-UAV channels are supposed to be invariant. To tackle the second challenge, we utilize an alternative optimization technique (i.e., BCD method) to decouple the three blocks of variables, which will be discussed in the next section.
% to obtain a local optimal solution
%, which is important to our following analysis.

The corresponding discrete-time variables can be denoted by $\calB=\{\bB(1), \ldots, \bB(N)\} \in \bbR^{M \times \barK \times N}$, $\bP=[\bp(1), \ldots, \bp(N)] \in \bbR^{M \times N}$, and $\bQ=[\bq(0), \bq(1), \ldots, \bq(N)] \in \bbR^{2 \times (N+1)}$. Since $T_s$ is small enough, the UAV's flying speed is considered to be constant in each time slot, $V(n)=\|\bq(n)-\bq(n-1)\|/T_{s}$.
Let $P_{\text{UAV}}(n)$ be the UAV's power consumption function of time slot $n$, with time argument $t$ replaced by $n T_s$ in \eqref{PV}. Hence, the equivalent EE optimization problem in discrete-time form can be written as  % \in \bbR^{M \times \barK \times N} equivalently given by $P_{\text{UAV}}(t)$
\begin{subequations}\label{eq:PDT}
\begin{align}
\underset{\begin{subarray}{c}\calB, \bP,\bQ \end{subarray}}{\max} &\quad
\frac{ \sum\limits_{n=1}^{N}\sum\limits_{m=1}^{M}\sum\limits_{k=1}^{\barK} b_{m,k}(n)\log_{2}\left(
1+\frac{\beta_{0}\,\beta_{m,k}P_{m}(n)}{\sigma^{2}\left(H^2+\left\|\bw_{m,k}-\bq(n)
\right\|^2\right)}\right)}
{\sum\limits_{n=1}^{N}\left[P_{\text{UAV}}(n)+\sum\limits_{m=1}^{M}P_{m}(n)\right]} \label{eq:ObjPDT} \\
\text{s.t.} &\quad  T_{s}
\sum\limits_{n=1}^{N}b_{m,k}(n)\log_{2}  \left(  1 +\frac{\beta_{0}\,\beta_{m,k}P_{m}(n)}{\sigma^{2} \left( {H^2+ \|\bw_{m,k}-\bq(n) \|^2} \right)} \right)\geq \barQ_{m,k},\quad\forall m,k \label{eq:C1PDT} \\
&\quad T_{s}\eta_{m,k}\beta_{m,k}\sum\limits_{n=1}^{N} P_{m}(n)\left(1-b_{m,k}(n)\right)\geq\barE_{m,k}, \forall m,k\label{eq:C2PDT}\\
&\quad\sum\limits_{m=1}^{M}\sum\limits_{k=1}^{\barK}b_{m,k}(n) \leq 1,\quad\forall n\label{eq:C3PDT}\\
&\quad b_{m,k}(n) \in \{0,1\},\quad\forall m,k,n\label{eq:C4PDT}\\
&\quad 0\leq P_{m}(n)\leq P_{\text{max}},\quad\forall m,n\label{eq:C5PDT}\\
%&\quad 0\leq b_{m,k}(n)\leq 1,\quad\forall m,k,n\label{eq:C6PDT}\\
&\quad \left\|\bq(n)-\bq(n-1)\right\|\leq V_{\text{max}}T_{s},\quad\forall n\label{eq:C6PDT}\\
&\quad \bq(0)=\bq(T)\label{eq:C7PDT}.
\end{align}
\end{subequations}

The numerator of the objective \eqref{eq:ObjPDT} is an integer-weighted sum of logarithm functions, and the denominator of \eqref{eq:ObjPDT} contains a linear combination of complicated non-convex functions $P_{\text{UAV}}(n)$'s. Furthermore, the left-hand-sides (LHSs) of the constraints \eqref{eq:C1PDT} and \eqref{eq:C2PDT} are all non-convex functions of coupled variables $\calB$, $\bP$, and $\bQ$; and \eqref{eq:C4PDT} is an integer constraint. Hence, the equivalent problem \eqref{eq:PDT} is a non-convex and mixed-integer optimization problem, which is too difficult to obtain a globally optimal solution.

%%\vspace{-0.1cm}
\section{Optimal Solution for EE Maximization}\label{solution}
%obtain a locally optimal solution of
In this section, we propose an efficient iterative algorithm to solve problem \eqref{eq:PDT} based on the BCD method~\cite{TsengBCD01, HongLuoBCDSPM17}, the cutting-plane technique~\cite{schrijver1998}, the Dinkelbach's method \cite{Crouzeix1991} and the SCA technique~\cite{Beck2010}. In each iteration, the three blocks of variables are optimized alteratively, corresponding to three optimization subproblems respectively. Specifically, for any given CEs' transmission power $\bP$ and the UAV's trajectory $\bQ$, we optimize the BDs' scheduling matrix $\calB$ by solving a linear mixed-integer programming (MIP) with the cutting-plane method; for any given BDs' scheduling matrix $\calB$ and the UAV's trajectory $\bQ$, we optimize the CEs' transmission power $\bP$ by solving a fractional programming (FP) with the Dinkelbach's method;
%utilizing the Dinkelbach's method;
and for any given BDs' scheduling matrix $\calB$ and the CEs' transmission power $\bP$, we optimize the UAV's trajectory $\bQ$ by jointly utilizing the Dinkelbach's method and the SCA technique. Also, the convergence and complexity of the proposed algorithm are analyzed.
% by applying the block coordinate descent (BCD), the successive convex approximation (SCA), and the Dinkelbach's method. in each iteration of which we optimize four blocks of variables alteratively~\cite{TsengBCD01}

%by jointly optimizing the UAV trajectory (i.e., $\bq(t)$), the BDs' scheduling (i.e., $\bA(t)$), the BDs' power reflection coefficients (i.e., $\bb(t)$), and the CEs' transmission powers (i.e., $\bp(t)$).
%To tackle with these troubles, our general idea is to optimize each block of variables alternately to obtain a local optimum solution, where the BCD method is exploited. Thus in our designed iterative algorithm, problem \eqref{eq:P2} is divided into four subproblems in an iteration, corresponding to the optimization of four blocks of variables respectively. Furthermore, Dinkelbach's method and SCA technique can be used when we solve the subproblems. And the detailed analysis of problem \eqref{eq:P2} will be shown in the following.

\subsection{Uplink Scheduling Optimization}
%the sake of simplicity the UAV trajectory matrix CE transmission power matrix BD power reflection matrix for further optimization
%we relax binary constraint \eqref{eq:C4PDT} to continuous constraint \eqref{eq:C1PDT.1} for the sake of simplicity.
In iteration $l \ (l \geq 1$), given the CEs' transmission power $\bP^{\{l\}}$ and the UAV's trajectory $\bQ^{\{l\}}$, the BDs' scheduling matrix $\calB$ can be optimized by solving the following problem
%the following optimization problem
\begin{subequations}\label{eq:PDT.1}
\begin{align}
\underset{\calB}{\max}  &\quad
\frac{\sum\limits_{n=1}^{N}\sum\limits_{m=1}^{M}\sum\limits_{k=1}^{\barK}b_{m,k} (n) \log_{2}\left(1+c_1^{\{l\}}(n)\right)} {\sum \limits_{n=1}^{N}P_{\text{UAV}}^{\{l\}}(n) + \sum \limits_{n=1}^{N}\sum \limits_{m=1}^{M}P_{m}^{\{l\}}(n)}\label{eq:ObjPDT.1}\\
\text{s.t.} &\quad \eqref{eq:C1PDT},\eqref{eq:C2PDT},\eqref{eq:C3PDT},\eqref{eq:C4PDT},
\end{align}
\end{subequations}
%where $c_1^{\{l\}}(n)$ is a constant that depends on the given $\bB^{\{l\}}(n)$, $\bp^{\{l\}}(n)$, $\bq^{\{l\}}(n)$, and other system parameters.
where the constant $c_1^{\{l\}}(n)$ is given by
\begin{align}
  c_1^{\{l\}}(n)=\frac{\beta_{0}\,\beta_{m,k}P_{m}^{\{l\}} (n)}{\sigma^{2}\left(H^2+\left\|\bw_{m,k}-\bq^{\{l\}} (n) \right\|^2\right)}.\nonumber
\end{align}

%^{\{l\}}
%
%where $E_{\text{UAV}}=T_{s}\sum \limits_{n=1}^{N}P_{\text{UAV}}(n T_s)$ is the total UAV propulsion energy consumption with a discrete form, and $E_{\text{CE}}=T_{s}\sum_{n=1}^{N}\sum \limits_{m=1}^{M}P_{m}(n T_s)$ is the CEs' consumed energy during the UAV operation period $T$.

Problem \eqref{eq:PDT.1} is a linear MIP problem, due to the binary constraint \eqref{eq:C4PDT}. MIP problems can be solved by several existing techniques, such as the branch-and-bound procedure, cutting-plane technique and group-theoretic technique~\cite{schrijver1998}. The cutting-plane technique is adopted in this paper due to its low complexity.

\vspace{-0.1cm}
\subsection{Transmission Power Optimization}
Given the BDs' scheduling matrix $\calB^{\{l\}}$ and the UAV's trajectory $\bQ^{\{l\}}$, the CEs' transmission power $\bP$ can be optimized by solving the following problem
\begin{subequations}\label{eq:PDT.3}
\begin{align}
\underset{\bP}{\max}  &\quad
\frac{  T_{s}\sum \limits_{n=1}^{N} \sum \limits_{m=1}^{M} \sum \limits_{k=1}^{\barK}b_{m,k}^{\{l\}} (n)\log_{2}\left(1+ c_{2}^{\{l\}}(n) P_{m}(n)\right)}{E_{\text{UAV}}^{\{l\}} + T_{s}\sum\limits_{n=1}^{N} \sum\limits_{m=1}^{M}P_{m}(n)}\label{eq:ObjPDT.3}\\
\text{s.t.} &\quad\eqref{eq:C1PDT},\eqref{eq:C2PDT},\eqref{eq:C5PDT},
\end{align}
\end{subequations}
%where $c_{3}^{\{l\}}(n)$ is a constant depending on the given $\bA^{\{l\}}(n)$, $\bb^{\{l\}}(n)$, $\bq^{\{l\}}(n)$, and other system parameters.
where $E_{\text{UAV}}^{\{l\}} = T_{s}\sum_{n=1}^{N}P_{\text{UAV}}^{\{l\}}(n)$ is the UAV's total energy consumption, and the constant $c_{2}^{\{l\}}(n)$ is given by
\begin{equation}
  c_{2}^{\{l\}}(n)=\frac{\beta_{0}\,\beta_{m,k}}{\sigma^{2}\left(H^2+\left\|\bw_{m,k}-\bq^{\{l\}}(n)\right\|^2\right)}.
\end{equation}
% in terms of CEs' transmit power $\bP$Subproblem

Note that problem \eqref{eq:PDT.3} is a FP problem, in which the objective \eqref{eq:ObjPDT.3} is a fractional function with a concave numerator and a linear denominator in terms of the CEs' transmission power $\bP$; and the constraints \eqref{eq:C1PDT}, \eqref{eq:C2PDT} and \eqref{eq:C5PDT} are all convex.
Therefore, we can exploit the standard Dinkelbach's method to transform the FP problem \eqref{eq:PDT.3} into its equivalent convex problem \cite{Crouzeix1991}. The optimal solution of problem \eqref{eq:PDT.3} can be obtained by solving the equivalent convex problem iteratively, which can be tackled with existing optimization tools like CVX~\cite{CVXTool2016}.

\vspace{-0.1cm}
\subsection{UAV Trajectory Optimization}
Given the BDs' scheduling matrix $\calB^{\{l\}}$ and the CEs' transmission power $\bP^{\{l\}}$, the UAV's trajectory $\bQ$ can be optimized by solving the following problem
\begin{subequations}\label{eq:PDT.4}
\begin{align}
\underset{\bQ}{\max}  &\quad
\frac{  T_{s} \sum \limits_{n=1}^{N}\sum\limits_{m=1}^{M}\sum\limits_{k=1}^{\barK}b_{m,k}^{\{l\}}(n)\log_{2}\left( 1+\frac{c_{3}^{\{l\}}(n)}{H^2+\left\|\bw_{m,k}-\bq(n)
\right\|^2}\right)}{T_{s}\sum\limits_{n=1}^{N}P_{\text{UAV}}(n)+E_{\text{CE}}^{\{l\}}}\label{eq:ObjPDT.4}\\
\text{s.t.} &\quad\eqref{eq:C1PDT},\eqref{eq:C6PDT},\eqref{eq:C7PDT},
\end{align}
\end{subequations}
where $E_{\text{CE}}^{\{l\}}=T_{s}\sum _{n=1}^{N}\sum _{m=1}^{M}P_{m}^{\{l\}}(n)$ is the CEs' total transmission energy, and the constant $c_{3}^{\{l\}}(n)$ is given by $c_{3}^{\{l\}}(n)=\beta_{0}\,\beta_{m,k}P_{m}^{\{l\}}(n)/\sigma^{2}$.
%, and  is the during time slot $n$ which is small enough that we regard the speed value as a constant.
% is a coefficient that depends on the given \{$\bP$, $\bb$, $\bA$\} in time slot $n$ and other system parameters
Obviously, problem \eqref{eq:PDT.4} is a FP problem, which can also be solved iteratively by the Dinkelbach's method. However, the transformed equivalent problem in each iteration is not convex, due to the non-concavity of the logarithmic function with respect to $\{\bq(n)\}$ in the objective \eqref{eq:ObjPDT.4} and constraint \eqref{eq:C1PDT}, as well as the non-convexity of the function $P_{\text{UAV}}(n)$ in the denominator of \eqref{eq:ObjPDT.4}. In the following, the SCA technique is utilized to deal with the non-convexity of the transformed problem.
%transform \eqref{eq:PDT.4} into a convex problem.
%in the denominator of \eqref{eq:ObjPDT.4}
%Note that the first and third terms of $P_{\text{UAV}}(n)$ in \eqref{PV} are both convex, but the second term is non-convex.

Preliminarily, similar to \cite{Zeng2018Energy}, we introduce the positive slack variables $\{y(n)\}$'s into the formula of $P_{\text{UAV}}(n)$ %optimization
\begin{equation}\label{eq:slacker}
  y(n)^{2}=\sqrt{1+\frac{V(n)^4}{4v_{0}^4}}-\frac{V(n)^2}{2v_{0}^2}.
\end{equation}
Let $\by=[y(1), \ldots, y(N)]^{\text{T}}$. Substituting \eqref{eq:slacker} into \eqref{eq:PDT.4} yields the following optimization problem
\begin{subequations}\label{eq:PDT.4.1}
\begin{align}
\underset{\bQ, \by}{\max}  &\quad
\frac{  T_{s} \sum \limits_{n=1}^{N}\sum\limits_{m=1}^{M}\sum\limits_{k=1}^{\barK}b_{m,k}^{\{l\}}(n)\log_{2}\left( 1+\frac{c_{3}^{\{l\}}(n)}{H^2+\left\|\bw_{m,k}-\bq(n)
\right\|^2}\right)}{T_s \sum\limits_{n=1}^{N}\left[ P_{\text{b}} \left(1+\frac{3V(n)^2}{U_{\text{tip}}^2}\right)+ P_{\text{i}}y(n)+\frac{d_0\theta sAV(n)^3}{2}\right]+ E_{\text{CE}}^{\{l\}}} \label{eq:ObjPDT.4.1}\\
\text{s.t.} &\quad\frac{1}{y(n)^{2}}\leq y(n)^{2}+\frac{V(n)^{2}}{v_{0}^{2}}, \quad \forall n\label{eq:C1PDT.4.1}\\
&\quad\eqref{eq:C1PDT},\eqref{eq:C6PDT},\eqref{eq:C7PDT}.
\end{align}
\end{subequations}
%since the constraint \eqref{eq:C1PDT.4.1} is always satisfied with equality

Noted that problem \eqref{eq:PDT.4.1} is equivalent to problem \eqref{eq:PDT.4}, since the constraint \eqref{eq:C1PDT.4.1} is always satisfied with equality when the optimal solution of this problem is obtained. Otherwise, if a constraint \eqref{eq:C1PDT.4.1} with any $n$ is satisfied with strict inequality, we can always increase the EE by decreasing the value of $y(n)$. However, the FP problem \eqref{eq:PDT.4.1} is still difficult to be solved, due to the non-concave numerator of \eqref{eq:ObjPDT.4.1}, as well as the non-convex constraints \eqref{eq:C1PDT} and \eqref{eq:C1PDT.4.1}.

%, where a given local point obtain from the last iteration is used to estimate the next point in the current iteration.

%Then,  all the three non-convexities can be addressed via a local SCA
%Then we apply the SCA technique to address this issue, here some unwilling formulas in a non-convex optimization problem can be replaced by their global bounds if necessary, leading to a locally optimal solution. %non-convex with respect to $\bq(n)$'s, but
%both the numerator of the objective \eqref{eq:ObjPDT.4.1} and the LHS of the constraint \eqref{eq:C1PDT}
The basic idea of SCA is to obtain a locally optimal (suboptimal) solution to the original non-convex problem by replacing its non-convex objective and constraints with their upper or lower bounds.
Notice that the logarithmic function in \eqref{eq:ObjPDT.4.1} and \eqref{eq:C1PDT} is convex with respect to the term $H^2+\left\|\bw_{k}-\bq(n)\right\|^2$. From the fact that a convex function is globally lower bounded by its first-order Taylor expansion, we have the following inequality for the logarithmic function
\begin{align}
   \log_{2}&\left( 1  +\frac{c_{3}^{\{l\}}(n)}{H^2+\left\|\bw_{m,k}-\bq(n)\right\|^2}\right) \geq \notag\\
   &\alpha_{m,k}^{\{l\}}(n)-\phi_{m,k}^{\{l\}}(n) \left(\|\bw_{m,k}-\bq(n)\|^2-\|\bw_{m,k}-\bq^{\{l\}}(n)\|^2\right)
   \overset{\Delta}{=} R_{m,k}^{\text{lb}, \{l\}} \left(\bq(n)\right), \label{eq:Taylor1}
\end{align}
where $\alpha_{m,k}^{\{l\}}(\!n\!) \!=\! \log_{2} \!\left(\!\! 1 \!+\! \frac{c_{3}^{\{l\}}(n)}{{H^2 + \left\|\bw_{m,k}-\bq^{\{l\}}(n)\right\|^2}} \! \!\right)$ and
%where $\alpha_{m,k}^{\{l\}}(\!n\!) \!=\! \log_{2} \!\left(\!\! 1 \!+\! \frac{c_{4}^{\{l\}}(n)}{{H^2 \!+\! \left\|\bw_{k}-\bq^{\{l\}}(n)\right\|^2}} \! \!\right) \text{and} \; \beta_{m,k}^{\{l\}}(n)$ $= \frac{(\log_{2}e)c_{4}^{\{l\}}(n)}{\left(H^2+\left\|\bw_{k}-\bq^{\{l\}}(n)\right\|^2+c_{4}^{\{l\}}(n)\right) \left( {H^2+\left\|\bw_{k}-\bq^{\{l\}}(n)\right\|^2} \right)}.$
%the expression $\beta_{m,k}^{\{l\}}(n)$ is  %given by $\alpha_{m,k}^{\{l\}}(n)$ and
\begin{equation}\label{eq:coef}
%%   \alpha_{m,k}^{\{l\}}(n) =& \log_{2}\left(\! 1 \!+\! \frac{c_{4}^{\{l\}}(n)}{H^2+\left\|\bw_{k}-\bq^{\{l\}}(n)\right\|^2} \! \right), \\
   \phi_{m,k}^{\{l\}}(n) = \frac{(\log_{2}e)c_{3}^{\{l\}}(n)}{\left(H^2+\left\|\bw_{m,k}-\bq^{\{l\}}(n)\right\|^2+c_{3}^{\{l\}}(n)\right)\left(H^2+\left\|\bw_{m,k}-\bq^{\{l\}}(n)\right\|^2\right)}.
\end{equation}
Note that the lower bound $R_{m,k}^{\text{lb}, \{l\}}\left(\bq(n)\right)$ in \eqref{eq:Taylor1} is concave with respect to $\bq(n)$.

%, where $V(n)=\|\bq(n+1)-\bq(n)\|/T_{s}$
The right-hand-side (RHS) of constraint \eqref{eq:C1PDT.4.1} is a jointly convex function of $y(n)$ and $V(n)$. By applying the first-order Taylor expansion again, the lower bound for the RHS of \eqref{eq:C1PDT.4.1} can be obtained as
\begin{align}\label{eq:Taylor2}
   \mathstrut &y(n)^{2}+\frac{\left\|\bq(n)-\bq(n-1)\right\|^{2}}{v_{0}^{2}T_{s}^{2}}  \geq y^{\{l\}}(n)^{2}+2y^{\{l\}}(n)
   \left(y(n)-y^{\{l\}}(n)\right) \notag\\
   &\quad-\frac{\left\|\bq^{\{l\}}(n)-\bq^{\{l\}}(n-1)\right\|^{2}}{v_{0}^{2}T_{s}^{2}}+\frac{2}{v_{0}^{2}T_{s}^{2}}
   \left(\bq^{\{l\}}(n)-\bq^{\{l\}}(n-1)\right)^{\text{T}} \left(\bq(n)-\bq(n-1)\right),
   %,\;\; \forall n.
\end{align}
%where $y^{\{l\}}(n)$ is the known values obtained at the $l$-th iteration.
where $\{y^{\{l\}}(n)\}$ is the obtained optimal $\{y(n)\}$ in the last iteration. The lower bound in \eqref{eq:Taylor2} is a jointly linear function with respect to $\bq(n)$ and $y(n)$.

%Therefore, given $\{\bq^{l}(n), y^{l}(n)\}$ at the $l$-th iteration, we can bring the lower bounds in inequalities \eqref{eq:Taylor1} and \eqref{eq:Taylor2} into its corresponding parts in problem \eqref{eq:PDT.4.1}.

%$R_{m,k}^{\text{lb}}\left(q(n)\right)$
Substituting the lower bound $R_{m,k}^{\text{lb}, \{l\}}\left(\bq(n)\right)$ for the logarithmic function in both \eqref{eq:ObjPDT.4.1} and the LHS of \eqref{eq:C1PDT}, and replacing the RHS of \eqref{eq:C1PDT.4.1} with the lower bound in \eqref{eq:Taylor2}, problem \eqref{eq:PDT.4.1} can be reformulated as %follows
\begin{subequations}\label{eq:PDT.4.2}
\begin{align}
\mspace{-15mu}\underset{\bQ,\by}{\max}  &\quad
\frac{T_s \sum\limits_{n=1}^{N}\sum\limits_{m=1}^{M}\sum\limits_{k=1}^{\barK}b_{m,k}^{\{l\}}(n) R_{m,k}^{\text{lb}, \{l\}}(\bq(n))}{T_s \sum\limits_{n=1}^{N}\left[ P_{0}\left(1+ \frac{3V(n)^2}{U_{\text{tip}}^2}\right)+ P_{i}y(n)+\frac{1}{2}\/d_0\theta sAV\!(n)^3\right]+ E_{\text{CE}}^{\{l\}}}\label{eq:ObjPDT.4.2}\\
\text{s.t.} &\quad T_{s}\sum\limits_{n=1}^{N} b_{m,k}(n)R_{m,k}^{\text{lb}, \{l\}} (\bq(n)) \geq \barQ_{m,k},\quad\forall m,k\label{eq:C1PDT.4.2}\\
&\quad \frac{1}{y(n)^{2}}\leq  y^{\{l\}}(n)^{2}+2y^{\{l\}}(n)\left(y(n)-y^{\{l\}}(n)\right)-
 \frac{\left\|\bq^{\{l\}}(n)-\bq^{\{l\}}(n-1)\right\|^{2}}{v_{0}^{2}T_{s}^{2}} \notag\\
&\quad\quad\quad \quad \ + \frac{2}{v_{0}^{2}T_{s}^{2}}\left(\bq^{\{l\}}(n)-\bq^{\{l\}}(n-1)\right)^{\text{T}} \left(\bq(n)-\bq(n-1)\right),\quad \forall n \label{eq:C2PDT.4.2}\\
&\quad\eqref{eq:C6PDT},\eqref{eq:C7PDT}.
\end{align}
\end{subequations}
The objective \eqref{eq:ObjPDT.4.2} is a fractional function with a concave numerator and a convex denominator, and all the constraints of \eqref{eq:PDT.4.2} are convex. Hence, utilizing the Dinkelbach's method, the optimal solution of problem \eqref{eq:PDT.4.2} can be obtained by solving its equivalent convex problem iteratively.

%, due to a lower-bound objective and a smaller feasible region resulting from applying the SCA technique
Note that the achieved maximal EE of problem \eqref{eq:PDT.4.2} is a lower bound of that of problem \eqref{eq:PDT.4.1} which is equivalent to the original problem \eqref{eq:PDT.4}. First, the feasible region in problem \eqref{eq:PDT.4.2} is typically a subset of that in problem \eqref{eq:PDT.4.1}. Once the constraints \eqref{eq:C1PDT.4.2} and \eqref{eq:C2PDT.4.2} are satisfied, the constraints \eqref{eq:C1PDT} and \eqref{eq:C1PDT.4.1} must be satisfied. Second, with the approximation of the logarithmic function in \eqref{eq:Taylor1}, the objective \eqref{eq:ObjPDT.4.2} is a global lower bound of objective \eqref{eq:ObjPDT.4.1}.
%Hence, the optimal value obtained from problem \eqref{eq:PDT.4.2} is a lower bound for that of problem \eqref{eq:PDT.4.1}.
%, though the former problem is much more reliable than the latter one. with lower bounds , but the reverse is not necessarily true

\subsection{Overall Algorithm}
In conclusion, the original problem \eqref{eq:PDT} can be efficiently solved through BCD method, when the three subproblems \eqref{eq:PDT.1}, \eqref{eq:PDT.3}, \eqref{eq:PDT.4.2} are alternatively optimized with the local points $\{\calB^{l}, \bP^{l}, \bQ^{l}\}$ updated in each iteration. The overall steps are summarized as Algorithm \ref{AlgorithmP1}.
\begin{algorithm}[t!]
\caption{BCD-based algorithm for solving problem \eqref{eq:PDT}}\label{AlgorithmP1}
\begin{algorithmic}[1]
\STATE Initialize the variables $\calB^{\{0\}}$, $\bP^{\{0\}}$, $\bQ^{\{0\}}$ and positive the threshold $\epsilon$. Let $l=1$.\\
\REPEAT
\STATE Solve \eqref{eq:PDT.1} by the cutting-plane technique for given $\{\bP^{\{l\}}$, $\bQ^{\{l\}}\}$, and obtain the optimal $\calB^{\{l+1\}}$.
%\STATE Solve \eqref{eq:PDT.2} for given $\{\calA^{\{l\}}$, $\bP^{\{l\}}$, $\bQ^{\{l\}}\}$, and obtain optimal $\calB^{\{l+1\}}$.
\STATE Solve \eqref{eq:PDT.3} by the Dinkelbach's method for given $\{\calB^{\{l+1\}}$, $\bQ^{\{l\}}\}$, and obtain the optimal $\bP^{\{l+1\}}$.
\STATE Solve \eqref{eq:PDT.4.2} by the Dinkelbach's method and the SCA technique for given $\{\calB^{\{l+1\}}$, $\bP^{\{l+1\}}\}$, and obtain the suboptimal $\bQ^{\{l+1\}}$.
\STATE  Update iteration index $l=l+1$.
\UNTIL The increment of the objective function value is smaller than $\epsilon$. %incremental
\STATE Return the optimal solution for \eqref{eq:PDT}, denoted as $\calB^{\star}=\calB^{\{l-1\}}$, $\bP^{\star}=\bP^{\{l-1\}}$, $\bQ^{\star}=\bQ^{\{l-1\}}$.
\end{algorithmic}
\end{algorithm}

\subsection{Convergence and Complexity Analysis}
The classic BCD algorithm converges as long as all subproblems for updating each block of variables are solved optimally in each iteration~\cite{HongLuoBCDSPM17}. However, in the proposed Algorithm \ref{AlgorithmP1}, the approximate problem \eqref{eq:PDT.4.2} of subproblem \eqref{eq:PDT.4} for UAV's trajectory optimization is solved suboptimally. Hence, the convergence analysis for the classic BCD technique cannot be directly used in our case, and the convergence of Algorithm \ref{AlgorithmP1} is proved as follows~\cite{YangYuanLiang19}.
\begin{mythe}
Algorithm \ref{AlgorithmP1} is guaranteed to converge.
\end{mythe}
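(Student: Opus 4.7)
The plan is to show monotonicity and boundedness of the objective sequence $\{\mathbf{EE}^{\{l\}}\}$ generated by Algorithm~\ref{AlgorithmP1}, and then invoke the monotone convergence theorem. Since the iterates produce feasible points throughout, it suffices to verify that each of the three inner updates does not decrease the EE value.

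First I would handle the two blocks that are solved \emph{optimally}. For the scheduling update, subproblem \eqref{eq:PDT.1} is solved exactly by the cutting-plane procedure, so
\[
\mathbf{EE}(\calB^{\{l+1\}},\bP^{\{l\}},\bQ^{\{l\}}) \;\geq\; \mathbf{EE}(\calB^{\{l\}},\bP^{\{l\}},\bQ^{\{l\}}).
\]
Similarly, for the power update, subproblem \eqref{eq:PDT.3} is a fractional program with concave numerator, affine denominator and convex constraints, so Dinkelbach's method returns a global optimum and
\[
\mathbf{EE}(\calB^{\{l+1\}},\bP^{\{l+1\}},\bQ^{\{l\}}) \;\geq\; \mathbf{EE}(\calB^{\{l+1\}},\bP^{\{l\}},\bQ^{\{l\}}).
\]

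The key step, and the one I expect to be the main obstacle, is the trajectory update, because \eqref{eq:PDT.4.2} is only a surrogate for \eqref{eq:PDT.4.1}. Here I would exploit the two defining properties of the SCA surrogates. (i)~\emph{Tightness at the current iterate}: at $\bq(n)=\bq^{\{l\}}(n)$ and $y(n)=y^{\{l\}}(n)$, the first-order Taylor expansions in \eqref{eq:Taylor1} and \eqref{eq:Taylor2} hold with equality, so the current point $(\bQ^{\{l\}},\by^{\{l\}})$ is feasible for \eqref{eq:PDT.4.2} and achieves the same objective value there as in \eqref{eq:PDT.4.1}. (ii)~\emph{Global lower-bounding}: the numerator of \eqref{eq:ObjPDT.4.2} lower-bounds that of \eqref{eq:ObjPDT.4.1}, and feasibility in \eqref{eq:C1PDT.4.2}--\eqref{eq:C2PDT.4.2} implies feasibility in \eqref{eq:C1PDT}--\eqref{eq:C1PDT.4.1}. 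Consequently the optimum $\bQ^{\{l+1\}}$ of the surrogate satisfies
\[
\mathbf{EE}(\calB^{\{l+1\}},\bP^{\{l+1\}},\bQ^{\{l+1\}}) \;\geq\; \mathbf{EE}_{\text{sur}}^{\{l\}}(\bQ^{\{l+1\}}) \;\geq\; \mathbf{EE}_{\text{sur}}^{\{l\}}(\bQ^{\{l\}}) \;=\; \mathbf{EE}(\calB^{\{l+1\}},\bP^{\{l+1\}},\bQ^{\{l\}}),
\]
where the first inequality uses (ii), the second uses optimality for the surrogate, and the equality uses (i).

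Chaining the three inequalities yields $\mathbf{EE}^{\{l+1\}}\geq \mathbf{EE}^{\{l\}}$, so the sequence of objective values is non-decreasing. For boundedness, I would note that $\mathbf{EE}$ is upper-bounded: the numerator of \eqref{eq:ObjPDT} is finite because each $b_{m,k}(n)\in\{0,1\}$ with at most one active term per slot, $P_m(n)\le P_{\max}$, and the logarithm is bounded above; while the denominator is lower-bounded by the strictly positive propulsion energy implied by \eqref{PV}. Hence $\{\mathbf{EE}^{\{l\}}\}$ is monotone and bounded, and therefore convergent, completing the argument.
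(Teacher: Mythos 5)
Your proposal is correct and follows essentially the same route as the paper's proof: the two exactly-solved blocks give monotonicity directly, and the trajectory step uses the identical three-link chain (tightness of the Taylor surrogates at the current iterate, optimality for the surrogate problem, and the global lower-bound/feasibility-subset property), followed by boundedness of the objective to conclude convergence. Your added justification that the current point is feasible for the surrogate and that the denominator is bounded away from zero only makes explicit what the paper leaves implicit.
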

% of problem \eqref{eq:P1} Notice that  to a locally optimal solution

\begin{proof}
First, in step 3 of Algorithm \ref{AlgorithmP1}, since the optimal solution $\calB^{\{l+1\}}$ of problem \eqref{eq:PDT.1} is obtained optimally for given $\bP^{\{l\}}$ and $\bQ^{\{l\}}$, we have the following inequality
% on the maximum EE %$Q(\btau, \balpha, \bP)$
\begin{align}
  {\mathbf{EE}}(\calB^{\{l\}}, \bP^{\{l\}}, \bQ^{\{l\}}) \leq {\mathbf{EE}}(\calB^{\{l+1\}}, \bP^{\{l\}}, \bQ^{\{l\}}). \label{eq:Qinequality1}
\end{align}

%
%Second, in step 4 of Algorithm \ref{AlgorithmP1}, since the optimal solution $\calB^{\{l+1\}}$ is obtained for given $\calA^{\{l+1\}}$, $\bP^{\{l\}}$, and $\bQ^{\{l\}}$, it holds that
%\begin{align}
%  {\mathbf{EE}}(\calA^{\{l+1\}}, \calB^{\{l\}}, \bP^{\{l\}}, \bQ^{\{l\}})\! \leq \! {\mathbf{EE}}(\calA^{\{l+1\}}, \calB^{\{l+1\}}, \bP^{\{l\}}, \bQ^{\{l\}}). \label{eq:Qinequality2}
%\end{align}
Second, in step 4 of Algorithm \ref{AlgorithmP1}, since the optimal solution $\bP^{\{l+1\}}$ of problem \eqref{eq:PDT.3} is obtained optimally by using Dinkelbach's method for given $\calB^{\{l+1\}}$ and $\bQ^{\{l\}}$, it holds that
\begin{align}
  {\mathbf{EE}}(\calB^{\{l+1\}}, \bP^{\{l\}}, \bQ^{\{l\}}) \leq  {\mathbf{EE}}(\calB^{\{l+1\}}, \bP^{\{l+1\}}, \bQ^{\{l\}}). \label{eq:Qinequality3}
\end{align}

Third, in step 5 of Algorithm \ref{AlgorithmP1}, it follows that
\begin{align}
 {\mathbf{EE}}(\calB^{\{l+1\}}, \bP^{\{l+1\}}, \bQ^{\{l\}}) &\eqa {\mathbf{EE}}_{\text{lb}} (\calB^{\{l+1\}}, \bP^{\{l+1\}}, \bQ^{\{l\}})\nonumber \\
  &\leb  {\mathbf{EE}}_{\text{lb}} (\calB^{\{l+1\}}, \bP^{\{l+1\}}, \bQ^{\{l+1\}}) \nonumber \\
  &\lec {\mathbf{EE}} (\calB^{\{l+1\}}, \bP^{\{l+1\}}, \bQ^{\{l+1\}}), \label{eq:Qinequality4}
\end{align}
where ($a$) comes from the fact that the Taylor expansion in \eqref{eq:Taylor1} is tight at the given local point $\bQ^{\{l\}}$, i.e., problem \eqref{eq:PDT.4.2} achieves the same maximal EE as problem \eqref{eq:PDT.4} at $\bQ^{\{l\}}$; ($b$) holds since $\bQ^{\{l+1\}}$ is the optimal solution to problem \eqref{eq:PDT.4.2}; and ($c$) holds since the maximal EE of problem \eqref{eq:PDT.4.2} is a lower bound of that of problem \eqref{eq:PDT.4}. The inequality \eqref{eq:Qinequality4} implies that the achieved maximal EE is always non-decreasing after each iteration, although the approximate problem \eqref{eq:PDT.4.2} of the original UAV trajectory optimization subproblem \eqref{eq:PDT.4} is solved locally optimally in each iteration.

%to obtain the optimal matrix $\bQ$
%Replacing the lower bound  \eqref{eq:Taylor1} into the objective \eqref{eq:ObjPDT.4.1} and the constraint \eqref{eq:C1PDT}, and the lower bound \eqref{eq:Taylor2} into the constraints \eqref{eq:C1PDT.4.1}, problem \eqref{eq:PDT.4.1} is reformulated as %follows

From \eqref{eq:Qinequality1}, \eqref{eq:Qinequality3} and \eqref{eq:Qinequality4}, we further have
\begin{align}
{\mathbf{EE}}(\calB^{\{l\}}, \bP^{\{l\}}, \bQ^{\{l\}}) \leq {\mathbf{EE}}(\calB^{\{l+1\}}, \bP^{\{l+1\}}, \bQ^{\{l+1\}}),
\end{align}
which implies that the achieved maximal EE is non-decreasing after each iteration in Algorithm \ref{AlgorithmP1}. Moreover, it can be easily checked that the objective value of problem \eqref{eq:PDT} has some upper bound of finite positive number. As a result, the proposed Algorithm \ref{AlgorithmP1} is guaranteed to converge. This ends the convergence proof. % to the optimal objective value and solutions
\end{proof}
%In Section \ref{simulation}, Algorithm \ref{AlgorithmP1} will be numerically shown to converge typically in a few iterations for our simulation setup.

%Finally, it is noted that the time complexity of Algorithm \ref{AlgorithmP1} is polynomial, since only one LP and two convex optimization problems need to be solved in each iteration. Hence, the proposed Algorithm \ref{AlgorithmP1} can be practically implemented with fast convergence for an F-ABCN with a moderate number of BDs.% and LU(s)

The time complexity of Algorithm \ref{AlgorithmP1} is polynomial. In each iteration, only a linear MIP problem \eqref{eq:PDT.1} is solved by using the cutting-plane technique, and two FP problems \eqref{eq:PDT.3} as well as \eqref{eq:PDT.4.2} are solved by using the Dinkelbach's method which needs to solve a series of convex problems.

%%%%%%
\section{Fly-and-Hover Scheme}\label{hover-and-fly}
%For comparison, we consider an intuitive hover-and-fly scheme, in which The visiting locations and hovering time are optimized.
In this section, we introduce the intuitive fly-and-hover scheme for the UBCN, in which the UAV sequentially hovers at $K$ particular positions with a fixed altitude coordinate $H$ and horizontal coordinates $\tilde{\bq}(i)$ for $i \in \calK$. While the UAV hovers above the horizontal position $\tilde{\bq}(i)$ for a time of $t(i)$ seconds, the UAV collects data from the $i$-th BD with the location coordinate $\bw_i$, $i \in \calK$. Furthermore, the UAV does not communicate with any BD, when it flies from the current hovering point to the next point at a constant speed. After a period of time $T$ seconds, the UAV flies back to its initial point. In general, the fly-and-hover scheme is suboptimal but low-complexity, and thus chosen as a benchmark in this paper.

% of our proposed fly-and-communicate scheme.
%For simplicity, the channel pathloss between BD $i$ and its associated CE $m$ is denoted by $\beta_i=\frac{\beta_{0}}{\left\|\bw_{i}-\bu_{m}\right\|^2}$, $i \in \calK$.
%In addition, the hovering time vector and the UAV trajectory matrix are denoted by $\bt=[t(1),\ldots, t(K)]^{\text{T}} \in \bbR^{K \times 1}$ and $\bQ For ease of exposition, =[\bq(0),\bq(1),\ldots,\bq(K)]^{\text{T}}

Note that once the hovering positions and the visiting order are determined, the UAV's flying trajectory is fixed. Given some hovering positions, it is a traveling salesman problem (TSP)~\cite{Graph1976} to find the optimal visiting order for the shortest traveling path. The TSP is NP-hard, and there is no effective method to solve it accurately. When the hovering positions change, the optimal solution to the TSP varies as well, which results in an unaffordable computation complexity. By leveraging classical TSP algorithm~\cite{helsgaun2000}, we obtain the (sub)optimal UAV visiting order for given the BDs' locations, which also determines the BDs' transmission order.

%, at the cost of slight increment of UAV propulsion power
Moreover, to save time for UAV collecting data from ground BDs, we assume that the UAV flies at the maximal speed $V_{\text{max}}$ between any two adjacent hovering positions.
Thus, the UAV's propulsion power in flying status $P_{\text{tra}}$ and the power in hovering status $P_{\text{hov}}$ are all constants, which are given by \eqref{PV} with $V(t)$ replaced by $V_{\text{max}}$ and 0, respectively. Denote the CEs' transmission power vector by $\tilde{\mathbf{p}}=[\tilde{P}(1), \ldots, \tilde{P}(K)]^T  \in \bbR^{K \times 1}$, the UAV hovering time vector by $\bt=[t(1),\ldots, t(K)]^{\text{T}} \in \bbR^{K \times 1}$, and the UAV's trajectory matrix by $\tilde{\bQ} =[\tilde{\bq}(0),\tilde{\bq}(1),\ldots,\tilde{\bq}(K)]^{\text{T}} \in \bbR^{2 \times (K+1)}$, respectively.  The total energy consumption of the UAV can be expressed as
\begin{align}\label{eq:ECEhaf}
  \tilde{E}_{\text{UAV}}(\bt, \tilde{\bQ}) = P_{\text{tra}}\sum\limits_{i=1}^{K}\frac{\left
\|\tilde{\bq}(i)-\tilde{\bq}(i-1)\right\|}{V_{\text{max}}}+P_{\text{hov}}\sum\limits_{i=1}^{K}t(i).
\end{align}
%where $P_{\text{hov}}$ is the constant UAV power consumption in hovering status, with $V(t)$ replaced by 0 given in \eqref{PV}.
%For simplicity, we obtain the optimal visiting order through an existing approximate algorithm based on the BDs' locations.
%Given some UAV trajectory $\bQ$,

The CE $m \in \calM$ that associates BD $i \in \calK$ transmits carrier signals with power $\tilde{P}(i)$ during the time period when the UAV flies from the hovering point $\tilde{\bq}(i-1)$ to $\tilde{\bq}(i)$ and hovers for $t(i)$ seconds to collect data from BD $i$. The total energy consumption of all CEs is given by
\begin{align}\label{eq:PCEhaf}
  \tilde{E}_{\text{CE}}(\tilde{\bp}, \bt, \tilde{\bQ})= \sum \limits_{i=1}^{K} \tilde{P}(i) \left( t(i)+\frac{\left\|\tilde{\bq}(i)-\tilde{\bq}(i-1)\right\|}{V_{\text{max}}}\negthinspace\right).
\end{align}
For the sake of notational conciseness,
%here, the energy harvesting efficiency of BD $i\in \calK$ is denoted by $\eta_i$; and
the channel pathloss gain between BD $i$ and its associated CE $m$ is denoted by $\beta_i=\frac{\beta_{0}}{\left\|\bw_{i}-\bu_{m}\right\|^2}$. The harvested energy by BD $i$ can be expressed as
\begin{align}
  \tilde{E}_{\text{BD}}(i)& \eqa \eta_i \beta_i\mspace{-15mu}\sum \limits_{j \in \calK_m \setminus \{i\}} \mspace{-10mu} \tilde{P}(j)  \left( t(j)+\frac{\| \tilde{\bq}(j)-\tilde{\bq}(j-1)\|}{V_{\text{max}}}\!\right) +  \eta_i \beta_i \tilde{P}(i)  \frac{\| \tilde{\bq}(i)-\tilde{\bq}(i-1)\|}{V_{\text{max}}}\nonumber\\
  %+ \eta_i \beta_i P(i) (1-b_i) t(i) \nonumber \\
  &=\eta_i\beta_i \sum \limits_{j \in \calK_m} \tilde{P}(j)  \left( t(j)+\frac{\| \tilde{\bq}(j)-\tilde{\bq}(j-1)\|}{V_{\text{max}}}\right) - \eta_i \beta_i \tilde{P}(i)t(i),\label{eq:energy_haf}
\end{align}
where the first term in (a) is the harvested energy of BD $i$ during the period when the UAV flies from hovering positions $\tilde{\bq}(j-1)$ to $\tilde{\bq}(j)$ and serves each BD $j\neq i, j \in \calK_{m}$;
%is flying to and then collecting information from other BDs associated to CE $m$,\in \calK_{m}
the second term is the harvested energy of BD $i$ during the time when the UAV flies from the position $\tilde{\bq}(i-1)$ to $\tilde{\bq}(i)$. %hovering
%The hovering time vector and the UAV trajectory matrix are denoted by $\bt=[t(1),\ldots, t(K)]^{\text{T}} \in \bbR^{K \times 1}$ and $\bQ =[\bq(0),\bq(1),\ldots,\bq(K)]^{\text{T}} \in \bbR^{(K+1) \times 2}$, respectively.

The total throughput of all the BDs can be expressed as
\begin{align}\label{eq:throughput_haf}
  \tilde{Q}=\sum\limits_{i=1}^{K} t(i)\log_{2}\left(+\frac{\beta_{0}\beta_{i} \tilde{P}(i) }{\sigma^{2}\left(H^{2}+\left\|\tilde{\bq}(i)-\bw_{i}\right\|^{2}\right)}\right).
\end{align}

%Denote the CEs' transmission power vector by $\mathbf{p}=[P(1), \ldots, P(K)]^T  \in \bbR^{K \times 1}$; denote the UAV hovering time vector by $\bt=[t(1),\ldots, t(K)]^{\text{T}} \in \bbR^{K \times 1}$; and denote the UAV trajectory matrix by $\bQ =[\bq(0),\bq(1),\ldots,\bq(K)]^{\text{T}} \in \bbR^{(K+1) \times 2}$, respectively.

Considering \eqref{eq:ECEhaf}, \eqref{eq:PCEhaf}, \eqref{eq:energy_haf} and \eqref{eq:throughput_haf}, the EE maximization problem for the UBCN with the hover-and-fly scheme is formulated as follows
\begin{subequations}\label{eq:PHAF}
\begin{align}
\underset{\tilde{\bp}, \bt, \tilde{\bQ}}{\max}  &\quad
\frac{ \sum\limits_{i=1}^{K}t(i)\log_{2}\left(1+\frac{\beta_{0}\beta_{i}\tilde{P}(i)}{\sigma^{2}\left(H^{2}+\left\|\tilde{\bq}(i)-\bw_{i}\right\|^{2}\right)}\right)}{\tilde{E}_{\text{UAV}}(\bt, \tilde{\bQ})+ \tilde{E}_{\text{CE}}(\tilde{\bp}, \bt, \tilde{\bQ)}}\label{eq:ObjPHAF}\\
%{P_{\text{tra}}\sum\limits_{i=0}^{K}\frac{\left\|\bq(i)-\bq(i-1)\right\|}{V_{\text{max}}}+P_{\text{hov}}\sum\limits_{i=1}^{K}t(i)+\sum\limits_{i=1}^{K}P(i)\left(t(i)+\frac{\left\|\bq(i)-\bq(i-1)\right\|}{V_{\text{max}}}\right)}\label{eq:ObjPHAF}\\
\text{s.t.}  &\quad t(i)\log_{2} \left(  1+\frac{\beta_{0}\beta_{i}\tilde{P}(i)}{\sigma^{2}\left(H^{2}+\left\|\tilde{\bq}(i)-\bw_{i}\right\|^{2}\right)}\right)\geq\barQ(i),\quad\forall i \label{eq:C1PHAF} \\
&\quad \eta_i \beta_i \sum \limits_{j \in \calK_m} \tilde{P}(j)  \left( t(j)+\frac{\| \tilde{\bq}(j)-\tilde{\bq}(j-1)\|}{V_{\text{max}}}\right) - \eta_i \beta_i \tilde{P}(i)t(i)  \geq \barE(i),\quad\forall i\label{eq:C2PHAF}\\%\raisetag{6pt}\\
%\notag \\   \hspace*{\fill}   \quad\mspace{95mu}
%&\quad\mspace{95mu}+\sum_{\begin{subarray}{1}j\in K_{m}\\j\neq i\end{subarray}} t(j)+\frac{\left\|\bq(j+1)-\bq(j)\right\|}{V_{\text{mr}}}\geq\barE(i),\quad\forall i\label{eq:C2PHAF}\\
%&\quad 0\leq b(i)\leq1,\quad\forall i\label{eq:C3PHAF}\\
&\quad 0\leq \tilde{P}(i)\leq P_{\text{max}},\quad\forall i\label{eq:C4PHAF}\\
&\quad \tilde{\bq}(0) = \tilde{\bq}(K). \label{eq:C5PHAF}\\
&\quad \sum_{i=1}^{K}t(i)+\frac{\left\|\tilde{\bq}(i)-\tilde{\bq}(i-1)\right\|}{V_{\text{max}}} \leq T,\label{eq:C6PHAF}
\end{align}
\end{subequations}
where \eqref{eq:C1PHAF} and \eqref{eq:C2PHAF} are the minimum throughput constraint and the minimum harvested energy constraint for each BD respectively, \eqref{eq:C4PHAF} is each CE's transmission power constraint, \eqref{eq:C5PHAF} means that the UAV finally flies back to its starting point after the time period of $T$ seconds, and \eqref{eq:C6PHAF} is the total operation time constraint for UAV's hovering and flying.

%Notice that the complexity of problem \eqref{eq:PHAF} is significantly lower than problem \eqref{eq:PDT}, since the number of optimization variables in problem \eqref{eq:PHAF} is much less than that of problem \eqref{eq:PDT}.

%Specifically, problem \eqref{eq:PHAF} needs to optimize only two $K$-length vector variables $\{\bp, \bt\}$ and a matrix variable $\bQ \in \bbR^{(K+1),2}$, while problem \eqref{eq:PDT} needs to optimize two three-dimensional array variable $\calB \in \bbR^{M \times \barK \times N}$, two matrix variables $\bP \in \bbR^{M \times N}$ and $\bQ \in \bbR^{2 \times N}$.
%In the time-discretion method, the number $N$ should be chosen sufficiently large such that the UAV-to-BD channels keep approximately unchanged in each slot.

For problem \eqref{eq:PHAF}, the objective \eqref{eq:ObjPHAF} is a fractional function with a non-concave numerator and a non-convex denominator.
%combination of non-convex logarithm functions, and the denominator is a complicated non-convex function.
The LHSs of the constraints \eqref{eq:C1PHAF} and \eqref{eq:C2PHAF} are all non-concave functions of coupled variables $\tilde{\bp}$, $\bt$ and $\tilde{\bQ}$.
% And the constraint \eqref{eq:C6PHAF} is also non-convex.
Hence, problem \eqref{eq:PHAF} is a non-convex optimization problem, which is difficult to be solved optimally. Fortunately, there is some resemblance in structure between problem \eqref{eq:PHAF} and problem \eqref{eq:PDT}. We therefore apply a similar strategy for problem \eqref{eq:PHAF} based on the BCD method, where each block of variables are optimized with the other two blocks of variables fixed in each iteration. Thus in the iterative process of the BCD-based algorithm, the original problem \eqref{eq:PHAF} is solved by alternatively optimizing three blocks of variables ($\tilde{\bP},\bt,\tilde{\bQ}$), which corresponds to three subproblems respectively. More detailed analysis is revealed in the following.

%And there is no universal method to deal with no-convex optimization problems effectively. In this paper, a useful algorithm is proposed for solving ($\mathrm{P}1$) based on three key elements: block coordinate descent (BCD) method, successive convenience approximation (SCA) technique and Dinkelbach's method. Specifically, a block of variables are optimized in the problem with the other three blocks of variables fixed in an iteration. Thus in the iterative process of BCD algorithm, the original problem ($\mathrm{P}1$) is solved by alternatively optimizing four subproblems that are corresponding to the four blocks of variables ($\bP,\bb,\bt,\bQ$) optimization respectively. Also, the variables are updated successively in the process until getting into a local optimum. Moreover, in the following analysis of the subproblems, SCA and Dinkelbach's method are applied if necessary.
%%

In iteration $l \ (l\geq 1)$, given the UAV hovering time $\bt^{\{l\}}$ and the UAV hovering positions $\tilde{\bQ}^{\{l\}}$, the CEs' transmission power optimization problem is a PF problem with a convex feasible region and a fractional objective consisting of a concave numerator and a linear denominator. Hence, this FP problem can be solved optimally by the standard Dinkelbach's method. The details are omitted herein due to space limitations.

Then, given the CEs' transmission power $\tilde{\bp}^{\{l\}}$ and the UAV hovering positions $\tilde{\bQ}^{\{l\}}$, the UAV hovering time optimization is a FP problem with a convex feasible region and a fractional objective consisting of a linear numerator and a linear denominator. This FP problem can also be optimally solved by Dinkelbach's method.

Lastly, given the CEs' transmission power $\tilde{\bp}^{\{l\}}$ and the UAV hovering time $\bt^{\{l\}}$, the optimization problem of UAV hovering positions $\tilde{\bQ}$ can be expressed as
\begin{subequations}\label{eq:P1.4}
\begin{align}
\underset{\tilde{\bQ}}{\max}  &\quad
\frac{ \sum\limits_{i=1}^{K}t^{\{l\}}(i)\log_{2}\left(1+\frac{\kappa^{\{l\}}(i)}{H^{2}+\left\|\tilde{\bq}(i)-\bw_{i}\right\|^{2}}\right)}{ P_{\text{tra}}\sum\limits_{i=0}^{K}\frac{\left
\|\tilde{\bq}(i)-\tilde{\bq}(i-1)\right\|}{V_{\text{mr}}}+E_{\text{hov}}+\tilde{E}_{\text{CE}}(\tilde{\bQ)}}\label{eq:ObjP1.4}\\
%\sum\limits_{i=1}^{K}P(i)\left(t(i)+\frac{\left\|\bq(i)-\bq(i-1)\right\|}{V_{\text{mr}}}\right)}\label{eq:ObjP1.4}\\
\text{s.t.} &\quad \eqref{eq:C1PHAF},\eqref{eq:C2PHAF},\eqref{eq:C5PHAF},\eqref{eq:C6PHAF},\notag
\end{align}
\end{subequations}
where the constant coefficient $\kappa^{\{l\}}(i)=\beta_{0}\beta_{i}\tilde{P}(i)/\sigma^{2}$ and $E_{\text{hov}} = P_{\text{hov}}\sum_{i=1}^{K}t(i)$. For the FP problem \eqref{eq:P1.4}, the Dinkelbach's method cannot be directly used due to the non-concavity of the logarithmic function with respect to $\{\tilde{\bq}(i)\}$ in \eqref{eq:ObjP1.4} and \eqref{eq:C1PHAF}, as well as the non-convex constraint \eqref{eq:C2PHAF}. Again we apply the SCA method to tackle these issues. Similar to problem \eqref{eq:PDT.4.1}, we use the SCA technique to obtain the globally lower bound for the logarithmic function in \eqref{eq:ObjP1.4} and \eqref{eq:C1PHAF} as follows
%\begin{align}\label{Taylor3}
%   &\log_{2}\negthickspace\left(\negthickspace 1\negthickspace+\negthickspace\frac{\kappa(i)}{H^{2}\negthickspace+\negthickspace\left\|\bq(i)\!-\!\bw_{i}\right\|^{2}}\negthickspace\right)\geq  \log_{2}\negthickspace\left(\negthickspace 1+\negthickspace\frac{\kappa(i)}{H^{2}\negthickspace+\negthickspace\left\|\bq^{\{l\}}(i)\!-\!\bw_{i}\right\|^{2}}\negthickspace\right)\nonumber\\
%   &-\frac{\left(\log_{2}e\right)\kappa(i)\left(\left\|\bq(i)\!-\!\bw_{i}\right\|^{2}\!-\!\left\|\bq^{\{l\}}(i)\!-\!\bw_{i}
%   \right\|^{2}\right)}{\left(H^{2}+\left\|\bq^{\{l\}}(i)-\bw_{i}\right\|^{2}+\kappa^{\{l\}}(i)\right)\left(H^{2}+\left\|\bq^{\{l\}}(i)-\bw_{i}\right\|^{2}\right)}\nonumber\\
%   &\overset{\Delta}{=} \tilde{R}_{\text{lb}}(\bq(i)).
%\end{align}
\begin{align}\label{Taylor3}
   \log_{2}\Big(  1+&\frac{\kappa^{\{l\}}(i)}{H^2+\left\|\bw_{i}-\tilde{\bq}(i)\right\|^2}\Big) \geq \notag \\
   &\tilde{\alpha}^{\{l\}}(i)-\tilde{\phi}^{\{l\}}(i) \left(\|\bw_{i}-\tilde{\bq}(i)\|^2-\|\bw_{i}-\tilde{\bq}^{\{l\}}(i)\|^2\right) \overset{\Delta}{=}\tilde{R}^{\text{lb}, \{l\}} \left(\tilde{\bq}(i)\right),
\end{align}
where $\tilde{\alpha}^{\{l\}}(n) = \log_{2}\left( 1+\frac{\kappa^{\{l\}}(i)}{H^{2}+\left\|\tilde{\bq}^{\{l\}}(i)-\bw_{i}\right\|^{2}}\right)$ and
\begin{equation}\label{log2}
   \tilde{\phi}^{\{l\}}(n) = \frac{(\log_{2}e)\kappa^{\{l\}}(i)}{\left(H^{2}+\left\|\tilde{\bq}^{\{l\}}(i)-\bw_{i}\right\|^{2}+\kappa^{\{l\}}(i)\right)\left(H^{2}+\left\|\tilde{\bq}^{\{l\}}(i)-\bw_{i}\right\|^{2}\right)}.
\end{equation}

And the globally lower bound for the LHS of constraint \eqref{eq:C2PHAF} is given by
\begin{align}\label{Taylor4}
  \left\|\tilde{\bq}(i)-\tilde{\bq}(i-1)\right\|^{2}\geq & -\left\|\tilde{\bq}^{\{l\}}(i)-\tilde{\bq}^{\{l\}}(i-1)\right\|^{2}  \notag\\
  &+2\left(\left\|\tilde{\bq}^{\{l\}}(i)-\tilde{\bq}^{\{l\}}(i-1)\right\|\right)^{\text{T}}(\left\|\tilde{\bq}(i)-\tilde{\bq}(i-1)\right\|).
\end{align}

Substituting the lower bound \eqref{Taylor3} and \eqref{Taylor4} into problem \eqref{eq:P1.4} and introducing slack variables $\{z(i)\}$ to replace $\left\|\tilde{\bq}(i)-\tilde{\bq}(i-1)\right\|$ in constraint \eqref{eq:C2PHAF}, we have the following optimization problem
\begin{subequations}\label{eq:P1.4.1}
\begin{align}
\underset{\tilde{\bQ},\bz}{\max}  &\quad
\frac{ \sum\limits_{i=1}^{K}t^{\{l\}}(i)\tilde{R}^{\text{lb}, \{l\}}(\tilde{\bq}(i))}{ P_{\text{tra}}\sum\limits_{i=0}^{K}\frac{\left
\|\tilde{\bq}(i)-\tilde{\bq}(i-1)\right\|}{V_{\text{mr}}}+E_{\text{hov}}+\tilde{E}_{\text{CE}}(\tilde{\bQ})}\label{eq:ObjP1.4.1}\\
%\sum\limits_{i=1}^{K}P(i)\left(t(i)+\frac{\left\|\bq(i)-\bq(i-1)\right\|}{V_{\text{mr}}}\right)}\label{eq:ObjP1.4.1}\\
\text{s.t.}& \quad \tilde{R}^{\text{lb}, \{l\}}(\tilde{\bq}(i))\geq\barQ(i),\quad\forall i\label{eq:C1P1.4.1}\\
& \quad\eta_i \beta_i \sum \limits_{j \in \calK_m} \tilde{P}(j) \left( t(j)+\frac{z(j)}{V_{\text{max}}}\right) - \eta_i \beta_i \tilde{P}(i)t(i)  \geq  \barE(i),\quad\forall i\label{eq:C1P1.4.2}\\
& \quad z(i)^{2}\leq -\left\|\tilde{\bq}^{\{l\}}(i)-\tilde{\bq}^{\{l\}}(i-1)\right\|^{2} \notag\\
& \quad\quad\quad\ +2(\|\tilde{\bq}^{\{l\}}(i)-\tilde{\bq}^{\{l\}}(i-1)\|)^{\text{T}}(\left\|\tilde{\bq}(i)-\tilde{\bq}(i-1)\right\|),\quad\forall i\label{eq:C1P1.4.3}\\
& \quad \eqref{eq:C5PHAF},\eqref{eq:C6PHAF}.\notag
\end{align}
\end{subequations}
Problem \eqref{eq:P1.4.1} is a FP problem with a convex feasible region and a fractional objective consisting of a concave numerator and a convex denominator, which can be efficiently solved by the Dinkelbach's method.

In conclusion, the original non-convex problem \eqref{eq:PHAF} for the hover-and-fly scheme can be solved through a BCD-based algorithm, where the local point $\{\tilde{\bp}^{\{l\}}, \bt^{\{l\}}, \tilde{\bQ}^{\{l\}}\}$ is updated by alternately optimizing the three blocks of variables. In addition, the complexity of solving problem \eqref{eq:PHAF} for the hover-and-fly scheme is significantly lower than that for the communicate-while-fly scheme, since the number of optimization variables in problem \eqref{eq:PHAF} are much smaller than that of problem \eqref{eq:PDT}.

\section{Simulation Results} \label{simulation}
%Simulation results for the optimal solution in general case will be introduced carefully later, with the simulations for the hover-and-fly scheme as a comparison. are distributed regularly to cover all the tags with the identical radius of $R_{CE}=20$ m.First, we give parameters of the system at $f=900$ MHz carrier frequency.  when the general case is considered Note that we adopt a simple strategy of coverage, in which a tags belongs to its closet CE and the association relationship between tags and CEs is determined. and algorithm
In this section, we provide numerical simulation results to verify the performance of the designed UBCN system. We consider a geographical area of size of $56 \times 56~\text{m}^2$ on the ground, where $M=4$ CEs and $K = 12$ BDs are deployed. A equal number of BDs are illuminated by each CE, $\barK = K/M =4$. The frequency for all CEs sending carrier signals is $f_c = 900$~MHz. For the UAV, the flying altitude is $H=20$~m, the maximal flying speed is $V_{\text{max}}=10$ m/s and the receiver noise power is $\sigma^2=-144$ dBm. Table~\ref{table1} lists simulation parameters used in the power-consumption model of rotary-wing UAVs~\cite{Zeng2018Energy}. The flying speed for the minimal propulsion power is calculated from \eqref{PV} as $V_{\text{me}}=5.76$~m/s. The blade profile power $P_{\text{b}} = \delta\rho sA\Omega^{3}R^{3}/8$ and induced power $P_{\text{i}} = (1+k)W^{3/2}/\sqrt{2\rho A}$ are 9.1827~W and 11.5274~W, respectively. Other parameters are set as flowing: the number of time slots is $N = 200$, the maximal CEs' transmission power is $P_{\text{max}}=6$ W, each BD's energy harvesting efficiency is $\eta_{m,k}=0.5$, each BD's requirement of minimal harvested energy is $\barE_{m, k} = 1 \times 10^{-4}$~Joule (J) and each BD's requirement of minimal throughput is $\barQ_{m,k}=\barQ$, $\forall m, k$. In addition, the threshold of algorithm~\ref{AlgorithmP1} is set $\epsilon=10^{-4}$.

\begin{table}[h]
\caption{Parameters of rotary-wing UAV.} \label{table1}
\footnotesize
%\vspace{-10pt}
\centering
\begin{tabular}{c|c|c}
  \hline
  \bfseries{Parameter}&\bfseries{Description}&\bfseries{Value}\\
  \hline
  $W$ & UAV weight in Newton & 4.21 \\
  \hline
  $\rho$ & Air density ($\text{kg/m}^{3}$) & 1.205 \\
  \hline
  $R$ & Rotor radius (m) & 0.3 \\
  \hline
  $A$ & Rotor disc area ($\text{m}^2$) & 0.2827 \\
  \hline
  $\Omega$ & blade angular velocity (r/s) & 200 \\
  \hline
  $U_{\text{tip}}$ & Tip speed of the rotor blade & 60 \\
  \hline
  $b$ & Number of blades & 4 \\
  \hline
  $c$ & Blade or aerofoil chord length & 0.0196 \\
  \hline
  $s$ & Rotor solidity & 0.0832 \\
  \hline
  $S_{\text{FP}}$ & Fuselage equivalent flat plate area ($\text{m}^{2}$) & 0.0118 \\
  \hline
  $d_{0}$ & Fuselage drag ratio & 0.5017 \\
  \hline
  $k$ & Increment correction factor & 0.1 \\
  \hline
  $v_{0}$ & Mean rotor induced velocity in hover & 2.4868 \\
  \hline
  $\delta$ & Profile drag coefficient & 0.012 \\
  \hline
\end{tabular}
\end{table}

\begin{figure}[t!]
\vspace{-0.2cm}
	\centering \includegraphics[width=.7\columnwidth]{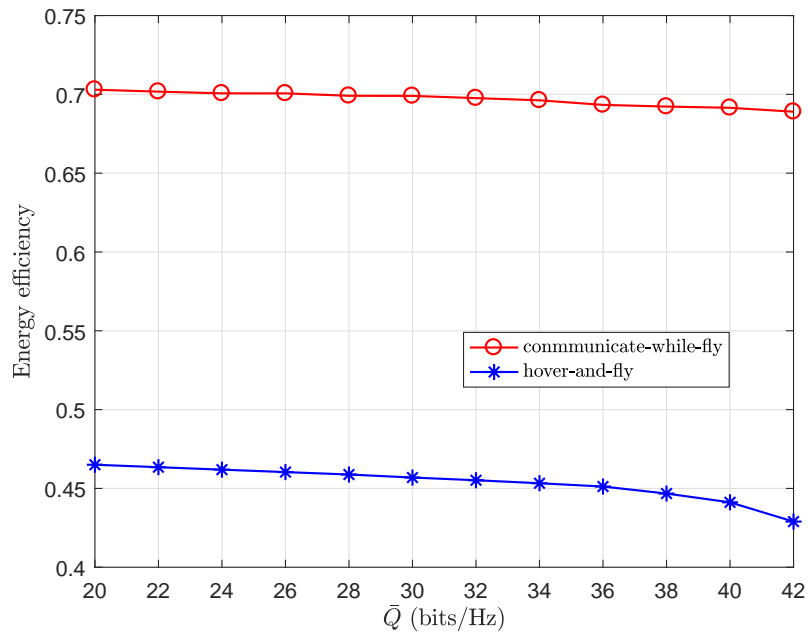}%plot_EE_Q
\caption{Maximal EE versus BD's throughput requirement $\barQ$.} \label{fig:plot_EE_Q}
\vspace{-0.5cm}
\end{figure}

\begin{figure}[htbp]
%\vspace{-0.3cm}
	\centering \includegraphics[width=.7\columnwidth]{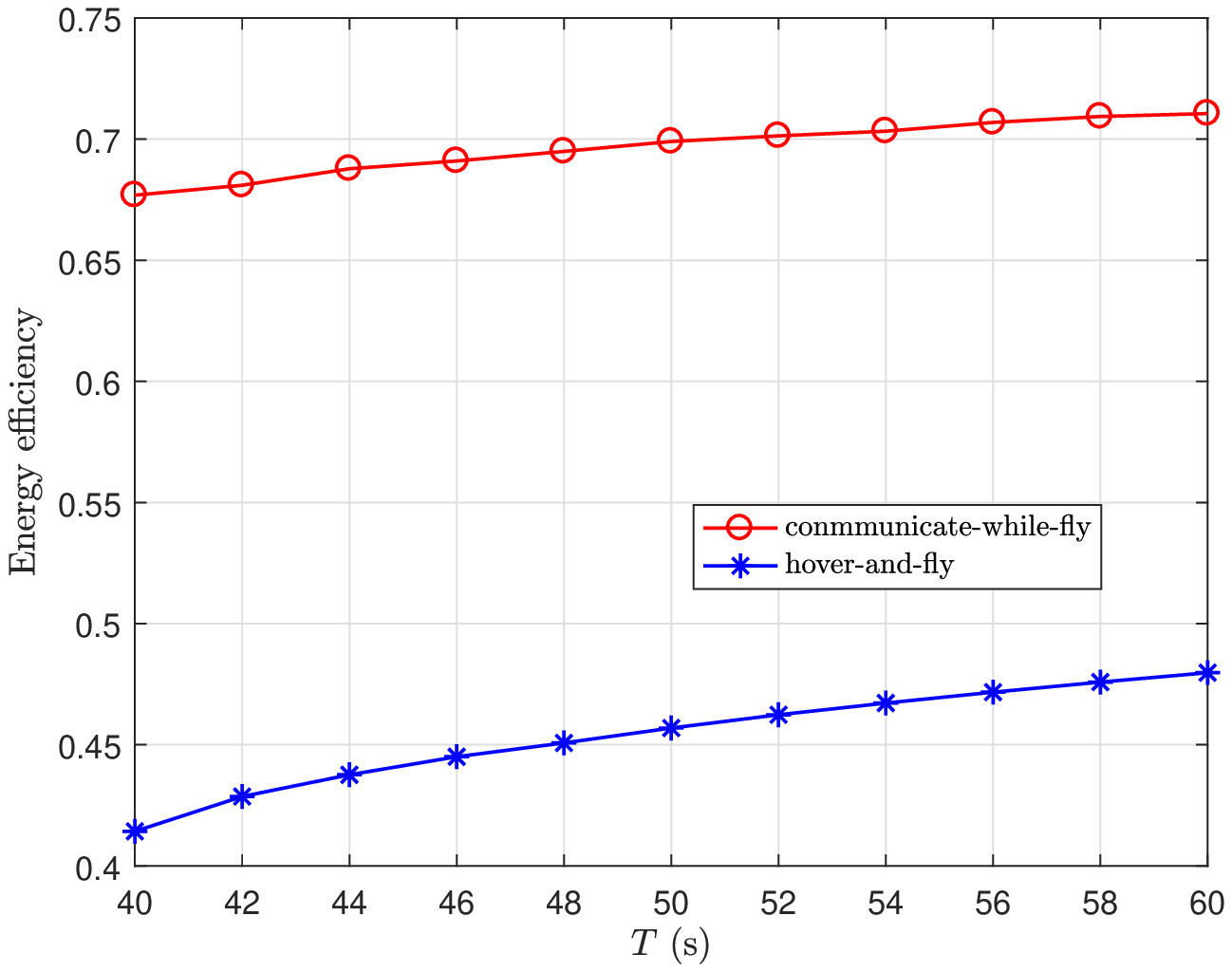}%plot_EE_Q
\caption{Maximal EE versus operation time $T$.} \label{fig:plot_EE_T}
\vspace{-0.5cm}
\end{figure}

%\begin{figure}[htb]
% \begin{minipage}[t]{0.49\linewidth}
%        \centering
%        \includegraphics[scale=0.60]{plot_EE_Q.eps}
%        \caption{Maximal EE versus BD's throughput requirement $\barQ$.} \label{fig:plot_EE_Q}
%    \end{minipage}%
%    \begin{minipage}[t]{0.49\linewidth}
%        \centering
%        \includegraphics[scale=0.60]{plot_EE_T.eps}
%        \caption{Maximal EE versus operation time $T$.} \label{fig:plot_EE_T}
%    \end{minipage} \\%[0.70mm]
% %\caption{System throughput and handover number with different handover cost.}\label{fig:cost}
%\end{figure}

Fig.~\ref{fig:plot_EE_Q} plots the maximal EE versus each BD's minimal throughput requirement $\barQ$ for the proposed communicate-while-fly scheme and the benchmark hover-and-fly scheme, with $T = 50$~s. In general, the EE decreases slowly and steadily as $\barQ$ increases from 20~bits/Hz to 42~bits/Hz. Furthermore, the maximal EE obtained in the communicate-while-fly scheme is always significantly higher than that of the hover-and-fly scheme. For instance, the EE achieved by our proposed communicate-while-fly scheme is 53.29\% higher than that achieved by the benchmark scheme, for $\barQ=30$ bits/Hz.

Fig.~\ref{fig:plot_EE_T} plots the maximal EE versus the operation time $T$ for the two schemes, with $\barQ = 30$~bits/Hz. The EE of each scheme increases steadily as $T$ increases. This is because that for longer operation time $T$, the UAV spends a higher proportion of time to perform more energy-efficient data collection via proper trajectory pattern. Moreover, the maximal EE achieved by the communicate-while-fly scheme is significantly higher than that of the hover-and-fly scheme. An approximate increment of 48.1~\% EE gains is achieved by the proposed communicate-while-fly scheme compared with the benchmark, for the case of $T=50~s$.
%aConsequently, the advantage of our proposed communicate-while-fly scheme in EE gains is verified by the above numerical results.
%and more information about the trajectory design and resources allocation is shown below.

%benchmark scheme, the proposed scheme increases the EE by at least 200$\%$, and enhances the supported minimum throughput for all BDs from 4.5 bits/Hz to 6 bits/Hz. The EE of the proposed scheme has a slight fluctuation for $\barQ>0.75$ bits/Hz, due to the change of flying trajectory and the resulting different BDs' scheduling solutions. The EE of the benchmark scheme decreases as $\barQ$ increases, due to the increased CE energy consumption.
% which will be discussed in later simulations
%In simulations, we construct the system structure on the ground as displayed in Fig.~\ref{fig:initial_trajrctory},
%\subsection{Simulations for general case}
%Since Algorithm \ref{AlgorithmPHAF} is a feasible approach to find the local optimal solution for problem \eqref{eq:P2}, the solution hence is closely related to the initial trajectory.
%anticlockwise

\begin{figure} [t!]
%\vspace{-0.2cm}
	\centering \includegraphics[width=.7\columnwidth]{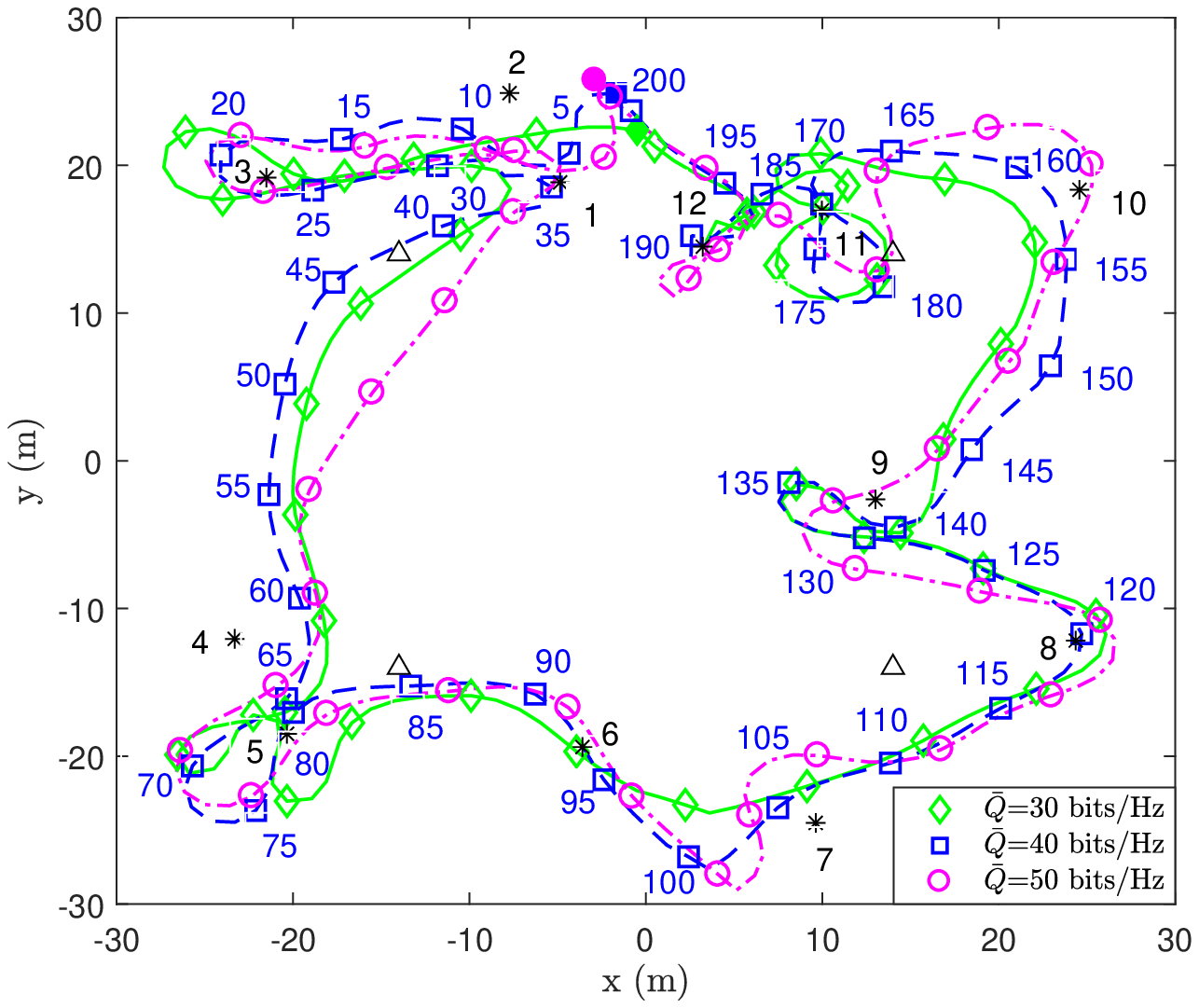}
\caption{Optimal UAV flying trajectory for different $\barQ$. } \label{fig:trajectory_3_Q} %communication-while-fly
%\vspace{-0.3cm}
\end{figure}
\begin{figure} [t!]
%\vspace{-0.2cm}
	\centering \includegraphics[width=.7\columnwidth]{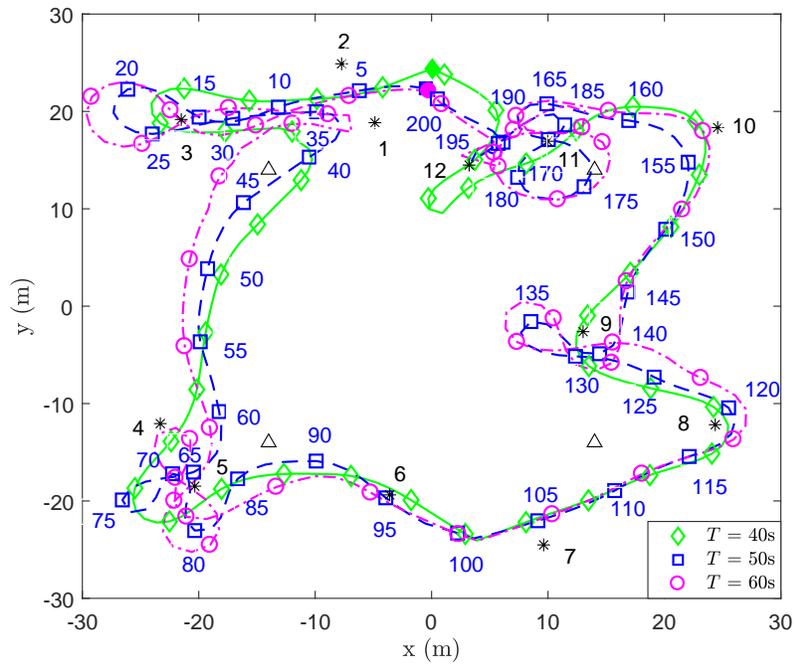}
\caption{Optimal UAV flying trajectory for different $T$. } \label{fig:trajectory_3_T} %communication-while-fly
%\vspace{-0.3cm}
\end{figure}

%\begin{figure}[htb]
% \begin{minipage}[t]{0.49\linewidth}
%        \centering
%        \includegraphics[scale=0.60]{trajectory_3_Q.eps}
%        \caption{Optimal UAV's flying trajectory for different $\barQ$.}\label{fig:trajectory_3_Q}
%    \end{minipage}%
%    \hspace{0.4cm}
%    \begin{minipage}[t]{0.49\linewidth}
%        \centering
%        \includegraphics[scale=0.60]{trajectory_3_T.eps}
%        \caption{Optimal UAV's flying trajectory for different $T$.} \label{fig:trajectory_3_T}
%    \end{minipage} \\%[0.70mm]
% %\caption{System throughput and handover number with different handover cost.}\label{fig:cost}
%\end{figure}

Fig.~\ref{fig:trajectory_3_Q} and Fig.~\ref{fig:trajectory_3_T} illustrate the UAV's optimal trajectories under different throughput requirements $\barQ$'s with $T = 50$~s and different operation time $T$'s with $\barQ = 30$~bits/Hz, respectively. In the two figures, the locations of CEs and BDs are marked by ``$\bigtriangleup$'' and ``$\ast$'', respectively; and each trajectory curve is sampled every 5 time slots with markers using the same color as the curve. From Fig. \ref{fig:trajectory_3_Q} and Fig.~\ref{fig:trajectory_3_T}, we first observe that the UAV usually hovers around those BDs which are located nearer to their associated CEs (named as nearly-located BDs), such as BD 3, 5, 9 and 11, rather than other BDs.
%the 3-th, 5-th, 9-th and 11-th BD.
%This is due to that a higher data rate (and EE) can be archived when the UAV flies around the nearer-located BDs that reflect stronger signals.
This is because that the shorter the distance between a BD and its associated CE is, the stronger signals the BD reflects,
%when the distance between a BD and a CE is shorter, the incident signals available for BD's uplink transmission via backscattering is stronger,
and thus easily leading to a higher data rate (and EE).
Furthermore, the UAV's trajectory has the ``8''-shape, when it collects data from those nearly-located BDs like BD 5 and 11. This interesting phenomenon is typically an energy-efficient flying pattern of fixed-wing UAVs appeared in~\cite{ZengZhangEETWC17}. However, for higher throughput requirement $\barQ$ of each BD (e.g., 50~bits/Hz) or less available time resource $T$ (e.g., 40~s), the ``8''-shape trajectory becomes less obvious such that more stringent throughput and time constraints can be satisfied.

\begin{figure} [t!]
	\centering \includegraphics[width=.7\columnwidth]{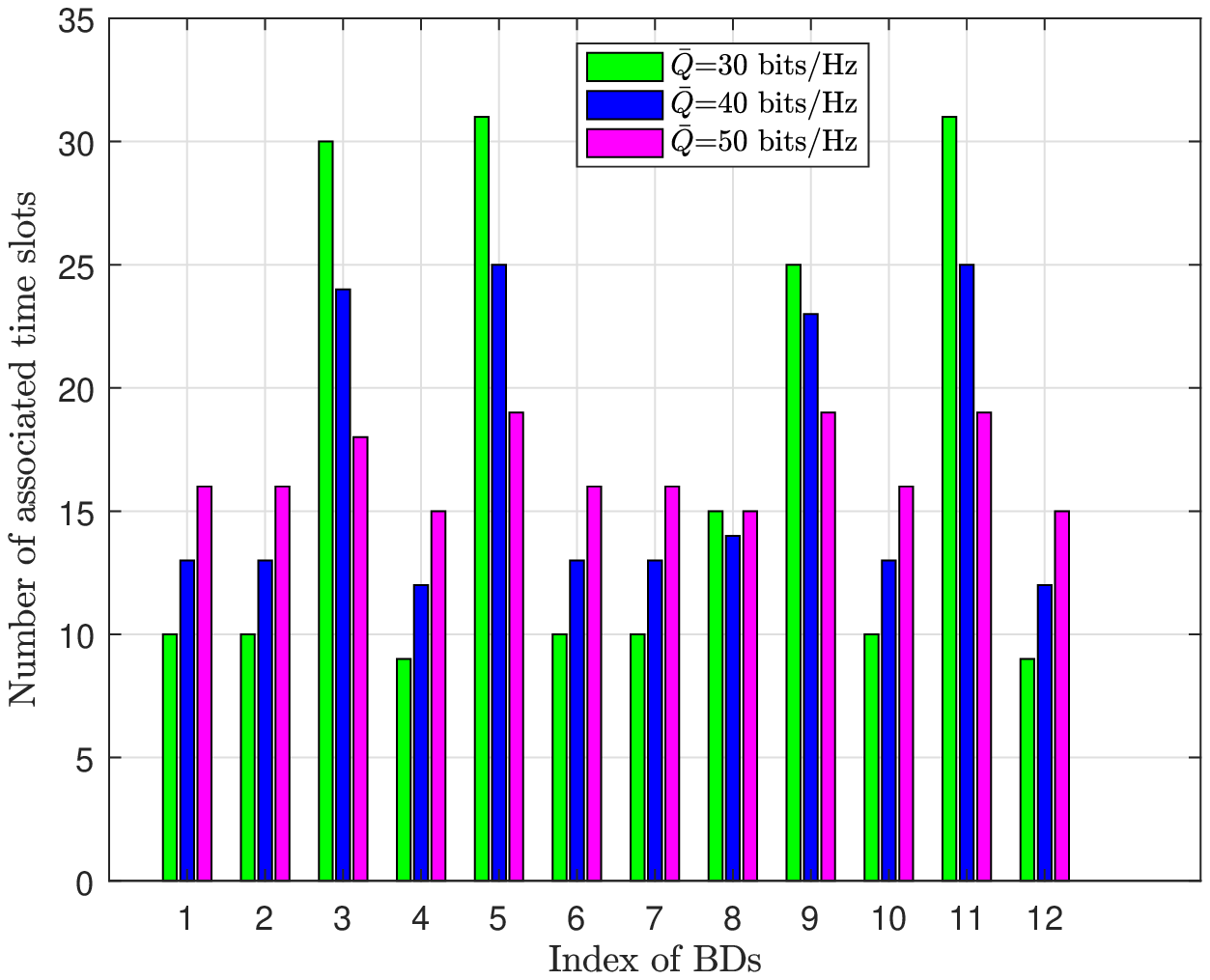}
\caption{Optimal time resource allocation among BDs for different $\barQ$.}\label{fig:TimeAllocation_3_Q}
\vspace{0.1cm}
\end{figure}
\begin{figure} [t!]
	\centering \includegraphics[width=.7\columnwidth]{}
\caption{Optimal time resource allocation among BDs for different $T$.}\label{fig:TimeAllocation_3_T}
\vspace{0.1cm}
\end{figure}

%\begin{figure}[htb]
% \begin{minipage}[t]{0.49\linewidth}
%        \centering
%        \includegraphics[scale=0.60]{TimeAllocation_3_Q.eps}
%        \caption{Optimal time resource allocation for different $\barQ$.}\label{fig:TimeAllocation_3_Q}
%    \end{minipage}%
%    \hspace{0.4cm}
%    \begin{minipage}[t]{0.49\linewidth}
%        \centering
%        \includegraphics[scale=0.60]{TimeAllocation_3_T.eps}
%        \caption{ Optimal time resource allocation for different $T$.} \label{fig:TimeAllocation_3_T}
%    \end{minipage} \\%[0.70mm]
% %\caption{System throughput and handover number with different handover cost.}\label{fig:cost}
%\end{figure}

%%
%%% Based on the Algorithm \ref{AlgorithmPHAF}, we can obtain some simulation results for different minimum throughput requirements $\barQ=1$ bit/Hz, $2$ bit/Hz, $4$ bit/Hz. Fig.~\ref{fig:UAV_trajectory_sum} illustrates the optimized UAV trajectories projected on the horizontal under three different minimum throughput constraints.
%%
Fig.~\ref{fig:TimeAllocation_3_Q} and Fig.~\ref{fig:TimeAllocation_3_T} illustrate the time resource allocation among BDs, under the same parameter settings as Fig.~\ref{fig:trajectory_3_Q} and Fig.~\ref{fig:trajectory_3_T}, respectively. In the two figures, the nearly-located BDs are usually allocated with more transmission slots, such as BD 3, 5, 9 and 11, since a higher data rate (and EE) can be achieved when the UAV flies around the nearly-located BDs with the same UAV-power consumption. This observation coincides with the phenomenon that the UAV hovers around the nearly-located BDs, as illustrated in Fig.~\ref{fig:trajectory_3_Q} and Fig.~\ref{fig:trajectory_3_T}. Also, we note that the number of time slots allocated for each BD becomes more balanced, with the increase of $\barQ$ or the decrease of $T$. This is because that sufficient time has to be scheduled to each BD with longer CE-to-BD distance to satisfy its throughput requirement $\barQ$, when $\barQ$ increases or the total operation time $T$ decreases.

%As a result, in order to improve the EE of the system, more time resource should be allocated for those BDs with nearer distances with their associated CEs.
%Fig~\ref{fig:time_slots} also shows that the number of time slots provided for BDs that bear poor uplink channels trends to increase generally (but not absolutely), as the value of $\barQ$ increase. In a word, the BDs that backscatter strong signals form its associated CE usually possess more time resources, if the constraints of minimum throughput for each BDs are satisfied
%When the UAV flies around these BDs, the uplink channel of that BD is usually better than its neighbours causing a larger amount of throughput under the same conditions.

% The reason for this is that the UAV must collect more data from each BD which means the system has to make a tradeoff between the overall system throughput maximization and the achievement of the minimum throughput for each BD. In this situation, in order to guarantee a larger threshold value of the minimum throughput constraint, the UAV should fly close enough to BDs in each cell, thus effects on the trajectory form the strong reflection interference can be alleviated. More relevant details are displayed in Fig.~\ref{fig:EH2Q2_trajectory} with Fig.~\ref{fig:EH2Q1_trajectory} as a comparison.

\begin{figure} [t!]
%%\vspace{-0.2cm}
	\centering \includegraphics[width=.7\columnwidth]{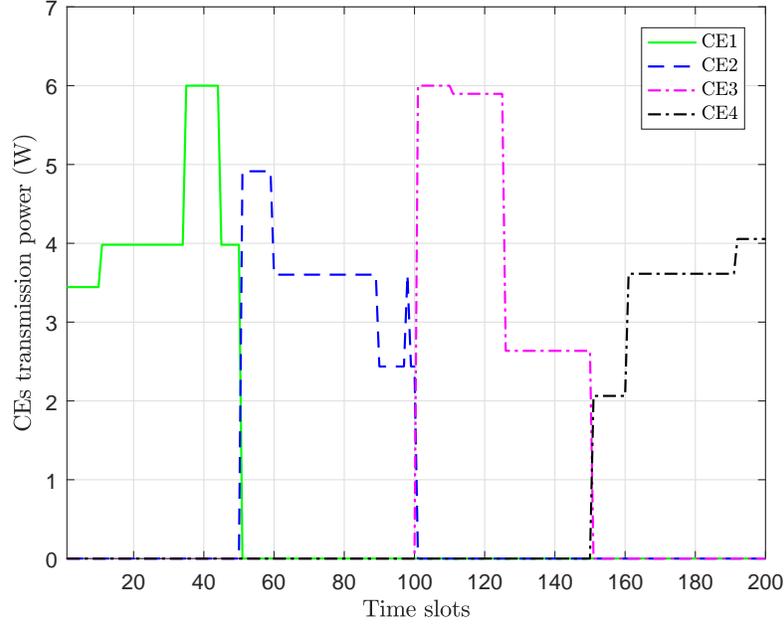}
\caption{CEs' transmission power versus time slot index.} \label{fig:CE_transmission_power}
%\vspace{-0.6cm}
\end{figure}

Fig.~\ref{fig:CE_transmission_power} plots each CE's transmission power in the period of the operation time $T = 50$~s, with $\barQ = 30$~bits/Hz. In any time slot, only one CE is turned on to illuminate all the BDs within its coverage. The CEs' emitting order is also optimized to conform to the UAV's trajectory design and the BDs' time resource allocation. To maximize the UBCN's EE, the optimal transmission power of each illuminating CE varies from around $2$~W to $6$~W, which keeps a relatively lower value most of the time in the whole operation period, rather than the maximal transmission power $8$~W.

\begin{figure} [t!]
%%\vspace{-0.2cm}
	\centering \includegraphics[width=.7\columnwidth]{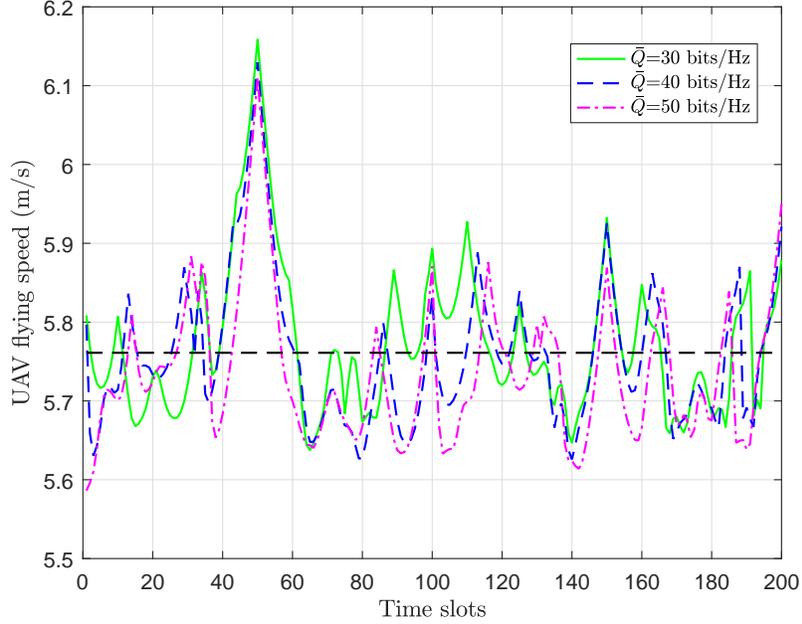}
\caption{Optimal UAV flying speed versus time slot index.} \label{fig:speed_3_Q}
%\vspace{-0.6cm}
\end{figure}
%\begin{figure}[htb]
% \begin{minipage}[t]{0.49\linewidth}
%        \centering
%        \includegraphics[scale=0.60]{speed_3_Q.eps}
%        \caption{Optimal UAV's flying speed versus time slot index.}\label{fig:speed_3_Q}
%        \vspace{-0.6cm}
%    \end{minipage}%
%    \hspace{0.4cm}
%    \begin{minipage}[t]{0.49\linewidth}
%        \centering
%        \includegraphics[scale=0.60]{}
%        \caption{Convergence of Algorithm \ref{AlgorithmP1} for energy efficiency minimization.} \label{fig:convergence_3_Q}
%        \vspace{-0.6cm}
%    \end{minipage} \\%[0.70mm]
% %\caption{System throughput and handover number with different handover cost.}\label{fig:cost}
%\end{figure}
Fig.~\ref{fig:speed_3_Q} plots the UAV's flying speed versus the time slots, with $T = 50$~s and $\barQ = 30, 40, 50$~bits/Hz. In general, the fluctuation of the UAV's flying speed is relatively small during the whole operation period of $T$ seconds, under different $\barQ$'s. We further observe that when the nearly-located BDs communicate with the UAV, the UAV's flying speed fluctuates around 5.76~m/s, depicted by the black dash line in Fig.~\ref{fig:speed_3_Q}, which is exactly the speed $V_{\text{me}}$ leading to lowest propulsion power for the UAV. Moreover, the UAV neither remain static nor flies at the maximal velocity, because these two extreme cases result into higher propulsion power consumption calculated in \eqref{PV}.
%Similar results can also be found under the same parameter settings as that of Fig.~\ref{fig:trajectory_3_T} and Fig.~\ref{fig:TimeAllocation_3_T}, but omitted for the sake of concise.

\begin{figure} [t!]
%\vspace{-0.2cm}
	\centering \includegraphics[width=.7\columnwidth]{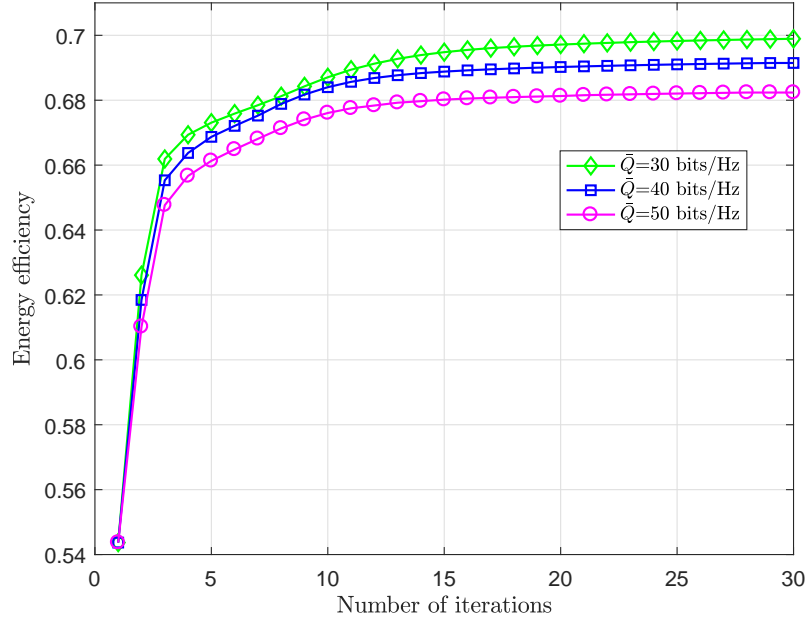}
\caption{Convergence of Algorithm \ref{AlgorithmP1} for energy efficiency minimization.} \label{fig:convergence_3_Q}
%\vspace{-0.6cm}
\end{figure}

Finally, Fig.~\ref{fig:convergence_3_Q} illustrates the convergence performance of the proposed Algorithm~\ref{AlgorithmP1}, with $T = 50~s$ and $\barQ = 30, 40, 60$~bits/Hz, respectively. We observe that the proposed algorithm converges within less than 25 iterations, and the EE increases significantly in the first about 5 iterations. The fast convergence of the proposed algorithm is validated. Moreover, the optimal EE for $\barQ = 30$~bits/Hz, $\barQ = 40$~bits/Hz and $\barQ = 60$~bits/Hz finally converges to 0.699, 0.691 and 0.682.

\vspace{6cm}
%\newpage
\section{Conclusion}\label{conslusion}
This paper has studied a UAV-assisted backscatter communication network, where the energy efficiency (EE) is maximized by jointly optimizing the BDs' scheduling, the CEs' transmission power, and the UAV's trajectory. The formulated EE optimization problem is solved by the proposed BCD-based iterative algorithm which also utilizes the cutting-plane method, the Dinkelbach's method and the SCA technique. Numerical results show that the proposed communicate-while-fly scheme achieves significant EE gains compared with the benchmark hover-and-fly scheme, and also validate the convergence of the proposed algorithm. Some useful insights on energy-efficient UAV-trajectory design and resource allocation are also obtained. In order to achieve the maximal EE, the UAV is optimized to hover around the nearly-located BDs at the specific speed with minimal propulsion energy consumption, and more time resource is allocated to the nearly-located BDs. There are some interesting future work such as the extension to the optimization of three-dimensional trajectory, the case of multi-antenna UAV, and other multiple-access schemes (e.g., NOMA).
%that minimizes the UAV propulsion power.
%By utilizing the block coordinated decent, the successive convex approximation, and the Dinkelbach's method,

%%\vspace{-0.2cm}
\bibliography{IEEEabrv,reference20191102}
\bibliographystyle{IEEEtran}%By using IEEEtrans, the number can be displayed.

\end{document}